\newtheorem{theorem}{Theorem}
\newtheorem{lemma}[theorem]{Lemma}
\newtheorem{definition}[theorem]{Definition}
\newtheorem{example}[theorem]{Example}
\newtheorem{proposition}[theorem]{Proposition}
\def\FullBox{\hbox{\vrule width 8pt height 8pt depth 0pt}}
\newcommand{\qed}{\;\;\;\FullBox}
\newenvironment{proof}{\noindent{\bf Proof:~~}}{\qed}
\newcommand{\bbF}{\mathbb{F}}
\newcommand{\supp}{\mathrm{supp}}
\newcommand{\sugg}{\leadsto}
\title{PAC Quasi-automatizability of Resolution over Restricted Distributions}
\author{Brendan Juba\thanks{Supported by ONR grant number N000141210358.}\\
Harvard University\\{\tt bjuba@alum.mit.edu}}
\begin{document}
\maketitle

\begin{abstract}
We consider principled alternatives to unsupervised learning in data mining by
situating the learning task in the context of the subsequent analysis task.
Specifically, we consider a query-answering (hypothesis-testing) task: In the 
combined task, we decide whether an input query formula is satisfied over a
background distribution by using input examples directly, rather than invoking
a two-stage process in which {\em (i)} rules over the distribution are learned 
by an unsupervised learning algorithm and {\em (ii)} a reasoning algorithm 
decides whether or not the query formula follows from the learned rules. In a 
previous work~\cite{juba13}, we observed that the learning task could satisfy 
numerous desirable criteria in this combined context -- effectively matching 
what could be achieved by agnostic learning of CNFs from partial information -- 
that are not known to be achievable directly. In this work, we show that 
likewise, there are reasoning tasks that are achievable in such a combined 
context that are not known to be achievable directly (and indeed, have
been seriously conjectured to be impossible, cf. Alekhnovich and Razborov~\cite
{ar08}). Namely, we test for a resolution proof of the query formula of a given 
size in quasipolynomial time (that is, {\em ``quasi-automatizing''} resolution).
The learning setting we consider is a partial-information, 
restricted-distribution setting that generalizes learning parities over the 
uniform distribution from partial information, another task that is known not to
be achievable directly in various models (cf. Ben-David and Dichterman~\cite
{bdd98} and Michael~\cite{michael10}).
\end{abstract}

\newpage

\section{Introduction}


When learning is employed in data mining and artificial intelligence, the
objective is not simply to form hypotheses that are supported by the data, but
to draw further inferences based on these learned hypotheses. The use of logical
inference on such learned hypotheses is problematic since learning algorithms
(namely, PAC-learning algorithms here~\cite{valiant84}) do not guarantee that
the hypotheses they produce can be interpreted as ``valid'' in the standard 
sense of mathematical logic. Motivated by this lacuna, Valiant~\cite{valiant00}
introduced {\em PAC-Semantics} (for a logic) to enable the analysis of logical 
inferences drawn from learned hypotheses. Such logical inferences are only
interesting in a partial-information model, since otherwise a query about a
candidate conclusion can be evaluated directly by evaluating the query formula 
on each complete example and considering the fraction of examples that satisfy 
the query.   

Valiant's work analyzed a two-stage process in which in the first stage, 
learning algorithms are used to extract explicit premises from examples drawn 
from a background distribution. (``Supervision'' is only provided in that the
rules predict individual bits of the examples based on others.) In the second 
stage, conclusions are drawn from these premises by applying logical inferences.
A stronger approach would be to ask, given a query formula as input, if there 
exist some premises that can be verified to hold on the (incomplete) examples, 
for which logical inference could derive the given query. The work of Khardon 
and Roth~\cite{kr97} on {\em learning to reason} suggests that it may be 
advantageous to take the two tasks together rather than in two separate stages 
as done by Valiant: they show that by processing the data directly, it is 
possible to answer queries (in a slightly different model) for which the 
reasoning problem is NP-hard, and for which the hypotheses that would support 
such queries are general DNFs, and hence not known to be learnable. In a 
follow-up work, Khardon and Roth~\cite{kr99} showed similar results for a 
answering a more limited class of queries from incomplete information.


One of Khardon and Roth's major objectives in their works was to provide 
tractable reasoning by avoiding the use of theorem-proving techniques. Indeed,
in most approaches to these problems of learning and reasoning that are actually
used in practice (e.g., based on graphical models) the resulting reasoning 
problem is known to be \#P-hard, even for highly restricted classes of formulas 
(cf. Roth~\cite{roth96}). Khardon and Roth's approach~\cite{kr99} could still 
only answer CNF queries under partial information that were, for example, Horn 
clauses or of low width. Our recent work~\cite{juba13} established that the 
combined problem of evaluating a query formula from partial examples using 
theorem-proving is actually no more difficult than the ``second stage'' problem 
of exact, classical reasoning for all ``natural'' (in the sense of Beame et 
al.~\cite{bks04}) fragments of proof systems.\footnote{%
We note some similarly motivating results of Alekhnovich et al.~\cite{abfkp08}, 
showing how theorem-proving algorithms can solve classical learning problems, 
and thus that learning is easier than theorem-proving over the same 
representations.} 
Such results raised the possibility that in some cases, such theorem-proving 
under PAC-Semantics may actually be easier than in the classical case. The 
results of the current work suggest that this is actually so: we 
quasi-automatize\footnote{%
{\em Automatization} is the standard terminology for the problem of deciding
whether or not a proof of a formula exists in a given (fragment of a) proof 
system. {\em Quasi-}automatization means that there is a quasipolynomial time 
algorithm for this problem.} the ({\em general, dag-like}) resolution proof 
system under PAC-Semantics (i.e., when given incomplete examples) with 
restricted distributions, whereas the best known automatization of resolution, 
first obtained by Clegg et al.~\cite{cei96}, achieves a running time of 
$n^{O(\sqrt{n\log n})}$ for polynomial-size resolution proofs over $n$ 
variables. Moreover, results by Alekhnovich and Razborov~\cite{ar08} suggest 
that perhaps resolution is not quasi-automatizable at all.


In this work (in contrast to \cite{juba13}), we consider problems in which the
background distribution (and hence learning problem) is restricted. The main
example we consider is a generalization of learning parities under the uniform
distribution: we note that one way of viewing the (standard) problem of learning
parities over $n$ bits is that there is an unknown parity constraint over the 
$n+1$ bits; here, we consider an ``unsupervised'' setting where there is again 
no distinguished label bit, and there may be several such constraints. We refer 
to such distributions as {\em affine distributions.} Such problems may easily be
impossible under partial information: in the (weakly) Restricted Focus of 
Attention (wRFA) model, Ben-David and Dichterman~\cite{bdd98} showed that 
learning a large parity becomes impossible if an insufficient number of bits can
be simultaneously viewed. We remark that such parity formulas are also natural 
and interesting from the standpoint of the reasoning problem, since they 
presented classic examples of theorems that were hard for resolution~\cite
{tseitin70}.

Our theorems are actually established for a more general class of distributions 
that merely feature a {\em ``correlation gap''}: when we condition on a 
conjunction, the other bits are either highly biased or feature bounded bias 
(with the bounds on these biases parameterizing the ``gap'' that gives these 
distributions their name). Such distributions are reasonably natural. For
example, some simple topic models, e.g., along the lines of the {\em pure 
document} model underlying the analysis of Latent Semantic Indexing given by 
Papadimitriou et al.~\cite{prtv00}, feature such correlation gaps. Roughly,
the presence or absence of ``primary terms'' in a document determine the 
relative prevalence or absence of other primary terms by determining the latent 
topic variable, and other words do not have such an effect.

\subsection{A closer look at our results}


To get a sense of how automatization under PAC-Semantics could possibly be 
easier than the usual, ``worst-case'' automatization, we first consider the case
of automatizing over the uniform distribution in Section~\ref{unif}. We 
observe that clauses of at least logarithmic width have a satisfied literal -- 
i.e., are {\em witnessed satisfied} -- in our partial examples with high 
probability. Using the techniques of our previous work~\cite{juba13}, it 
therefore suffices to search for proofs of small width for each of the examples,
which is known to be automatizable. 

We then turn to the learning context, in which there are parity constraints on 
the distribution that could be useful as premises in a small proof of the query.
Our algorithm distinguishes when an input CNF can be refuted by a small 
resolution proof on the basis of some learnable formulas over the unknown 
distribution (e.g., some small parity constraints) from when the input CNF is 
satisfied with moderate probability, that is, distinguishing when its negation
(a DNF) can be proved by a small resolution proof, from when its negation is
falsified with moderately large probability:

\begin{theorem}[Main theorem, cf. Theorem~\ref{mainthm}]\label{mainthm-intro}
There is an algorithm that,
given an input CNF $\varphi$, a bound $p(n)$, $\mu,\epsilon,\gamma,\delta\in [0,
1]$, and access to partial examples from an affine distribution $D$ in which
indices are hidden independently with probability $1-\mu$ in each example, and 
given that either
\begin{compactitem}
\item $\varphi$ is satisfied by $D$ with probability at least $\epsilon+\gamma$ 
or
\item there is a resolution refutation of size $p(n)$ of 
$\psi_0\wedge\psi_1\wedge\varphi$ for CNFs $\psi_0$ and $\psi_1$ such that 
\begin{compactitem}
\item $\psi_0$ consists of $1-\frac{\gamma}{n^{O(\log\frac{p(n)}
{\gamma})}}$-valid clauses and
\item $\psi_1$ is witnessed to evaluate to true with probability 
$1-\epsilon+\gamma$ on the partial examples
\end{compactitem}
\end{compactitem}
decides which case holds with probability $1-\delta$, and runs in time 
$n^{O(\frac{1}{\mu}\log\frac{p(n)}{\gamma\delta})}$.
\end{theorem}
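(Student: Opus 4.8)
The plan is to reduce, example by example, to the bounded-width resolution automatizer of Section~\ref{unif} (cf.\ Clegg et al.~\cite{cei96}). Set $d=\Theta(\tfrac1\mu\log\tfrac{p(n)}{\gamma\delta})$ and $k=\Theta(\log\tfrac{p(n)}{\gamma})$, and draw $m=n^{\Theta(\frac1\mu\log\frac{p(n)}{\gamma\delta})}$ masked examples $\rho_1,\dots,\rho_m$. From these we first compute the CNF $\Psi$ consisting of the clauses of every parity on at most $k$ of the $n$ variables that is never witnessed to be \emph{violated} on any $\rho_i$; each such parity contributes $2^{k-1}=(p(n)/\gamma)^{O(1)}$ clauses of width $\le k$, and there are $n^{O(k)}$ such parities, so $\Psi$ is computed within the target time. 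Then, for each $\rho_i$, we run the width-$d$ automatizer on $\varphi|_{\rho_i}\wedge\Psi|_{\rho_i}$ — i.e.\ we search for a refutation in which every \emph{derived} (non-axiom) clause has width at most $d$ — which by~\cite{cei96} takes $n^{O(d)}=n^{O(\frac1\mu\log\frac{p(n)}{\gamma\delta})}$ time. We output the ``refutable'' case iff a refutation is found for strictly more than a $1-\epsilon$ fraction of the $\rho_i$.

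For soundness (never output ``refutable'' when $\varphi$ is satisfied with probability $\ge\epsilon+\gamma$): if the automatizer succeeds on $\rho_i$ then $\varphi|_{\rho_i}\wedge\Psi|_{\rho_i}$ is unsatisfiable, so the complete example $a$ underlying $\rho_i$ falsifies $\varphi$ unless it falsifies $\Psi$. Here the partial-information situation is favorable, because a parity on few variables is an all-or-nothing event: a parity that is \emph{not} a constraint of $D$ is violated with probability near $1/2$ on a fully-revealed example, hence is witnessed violated on some $\rho_i$ except with probability $2^{-\Omega(m\mu^k)}$, and a union bound over the $n^{O(k)}$ candidate parities shows that, except with probability $\delta/3$, every parity in $\Psi$ genuinely holds with probability $1$ over $D$. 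Thus on each $\rho_i$ a successful search implies $a\not\models\varphi$, so a refutation is found with probability at most $1-\epsilon-\gamma$, and by the choice of $m$ the empirical fraction stays below $1-\epsilon$ except with probability $\delta/3$.

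For completeness (output ``refutable'' when there is a size-$p(n)$ refutation $\pi$ of $\psi_0\wedge\psi_1\wedge\varphi$, with $\psi_0$ a set of width-$\le k$ parity-constraint clauses of $D$ and $\psi_1$ witnessed true with probability $1-\epsilon+\gamma$): since a true constraint of $D$ is never witnessed violated, $\Psi$ contains every clause of $\psi_0$, so $\pi$ is also a size-$p(n)$ refutation of $\psi_1\wedge\Psi\wedge\varphi$. Fix a masked example $\rho$ with underlying $a$ and consider $\pi|_\rho$, a valid refutation of $(\psi_1\wedge\Psi\wedge\varphi)|_\rho$. With probability $1-\epsilon+\gamma-o(\gamma)$ over $\rho$: $\psi_1$ is witnessed true, so $\psi_1|_\rho$ is trivial (probability $1-\epsilon+\gamma$ by hypothesis); $a\models\psi_0$ (probability $1$); and every clause of $\pi$ of width exceeding $d$ is witnessed satisfied by $\rho$. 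Granting these, $\pi|_\rho$ refutes $(\Psi\wedge\varphi)|_\rho$ with every derived clause of width $\le d$ and every axiom a clause of $\varphi$ (given, possibly wide) or of $\Psi$ (given, width $\le k\le d$), so the width-$d$ automatizer succeeds on $\rho$. Hence a refutation is found for a $1-\epsilon+\gamma-o(\gamma)$ fraction of examples, exceeding the threshold except with probability $\delta/3$; with the soundness bound the total error is $\delta$.

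\textbf{The main obstacle} is the third event above: that a random masking $\rho$ of a sample from $D$ witnesses satisfied \emph{every} clause of $\pi$ of width exceeding $d=\Theta(\tfrac1\mu\log\tfrac{p(n)}{\gamma\delta})$. The argument is of random-restriction / width--size-tradeoff type: a clause $C$ of width $w>d$ fails to be witnessed satisfied only if every literal of $C$ on a revealed variable is falsified by $a$; conditioning on the revealed positions and the previously revealed values (a conjunction), the correlation gap forces each such literal to be satisfied with probability $\Omega(1)$ — the ``bounded bias'' parameter — \emph{except} for variables pinned by the affine structure, which are forced either to satisfy the literal (which helps) or to falsify it. Ruling out the pinned-to-falsify case for a whole wide clause is the delicate step: taking $\pi$ minimal, a clause all of whose literals are already determined false by the revealed bits together with $\psi_0$ would yield the contradiction earlier in $\pi$, so $\Omega(\mu w)=\Omega(\mu d)$ of $C$'s literals are independently revealed-and-satisfying with probability $\Omega(\mu)$ each, whence $C$ is witnessed satisfied except with probability $(1-\Omega(1))^{\Omega(\mu d)}\le\gamma/(3\,p(n))$, and a union bound over the $\le p(n)$ clauses of $\pi$ finishes. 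Calibrating the constant in $d$ against $\mu$ and the gap parameters, and checking that $k=\Theta(\log\tfrac{p(n)}{\gamma})$ simultaneously keeps the required $m$ within budget (when $\mu\ge 1/n$) and keeps $\Psi$ small enough for the soundness union bound, are the remaining bookkeeping.
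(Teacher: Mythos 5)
The soundness half of your plan is fine, but the completeness half rests on a claim that is false over affine distributions: that every clause of $\pi$ of width exceeding $d$ is witnessed satisfied with high probability under $M_\mu(D)$. Take an affine distribution in which the constraints force $x_1=x_2=\cdots=x_v$ and consider the wide clause $x_1\vee\cdots\vee x_v$: it is falsified with probability $1/2$, and on those examples it is never witnessed satisfied, no matter how wide it is. Such clauses can perfectly well occur as intermediate steps of a minimal size-$p(n)$ refutation of $\psi_0\wedge\psi_1\wedge\varphi$ (intermediate clauses of a refutation need not be valid, let alone witnessed), and your proposed rescue via minimality does not apply: the literals are pinned false given a few of the clause's \emph{own} other literals (i.e., $\neg C'\sugg\neg\ell$ for small $C'$), which yields no contradiction derivable ``earlier'' in $\pi$ and no inconsistency with the revealed bits. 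This is exactly the case the paper's proof of Theorem~\ref{mainthm} is built around: it does \emph{not} claim witnessing for these clauses, but instead splits $\Pi$ into $\Pi_1$ (witnessed via Lemmas~\ref{wide-witnessed} and~\ref{consconsts}) and $\Pi_0$, and then shows by induction on the steps of $\Pi|_\rho$ that a width-$\le w$ subclause of each $\Pi_0$ clause can be \emph{derived} in width $2w$ by interleaving cuts with the learned clauses $C'\vee\neg\ell$. Your algorithm does supply small parity constraints $\Psi$ as axioms, so a replacement low-width refutation plausibly exists, but your argument never constructs it; and your variant automatizer (wide axioms allowed, derived clauses of width $\le d$) does not sidestep the problem either, since an unwitnessed wide clause of $\varphi|_\rho$ can only be brought below width $d$ by a sequence of cuts against $\Psi$ whose intermediate results exceed width $d$ unless one argues, as the paper does, that the reduction can be carried out subclause-by-subclause.

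A second, separate gap is that you prove a weaker statement about $\psi_0$: you assume its clauses are width-$\le k$ clauses of exact parity constraints of $D$, whereas the theorem allows arbitrary clauses that are merely $1-\frac{\gamma}{n^{O(\log(p(n)/\gamma))}}$-valid, including wide clauses and clauses not of parity form. Handling these is a real part of the paper's argument: clauses of $\psi_0$ falling in $\Pi_0$ are reduced to narrow subclauses $\psi_0'$ whose validity degrades only slightly (hence the $4n(2n+1)^w$ factor in the gap parameter), and Lemma~\ref{learned-valid} guarantees that \emph{all} sufficiently valid narrow clauses are learned, not only implied-literal or parity clauses. Since your $\Psi$ contains only exact small parities, covering the general $\psi_0$ would at minimum require an extra argument (e.g., that in the affine case the reduced narrow subclauses are perfectly valid and hence weakenings of small parity clauses already entailed by $\Psi$), which the proposal does not supply.
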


Our main theorem supports the learning of two kinds of hypotheses, denoted 
$\psi_0$ and $\psi_1$ in the theorem statement. The conditions on $\psi_0$ 
correspond essentially to standard, ``realizable'' learning. Although in 
contrast to the usual learning set-up we allow some potentially noticeable 
counterexamples to the target hypothesis, these counterexamples are required to 
be very rare relative to the accuracy parameter $\gamma$ given to the algorithm.
Since the parity constraints in an affine distribution hold perfectly, the 
learnability of such a $\psi_0$ establishes the learnability of the parity 
constraints of an affine distribution in the context of a resolution proof, 
which after all can only operate on clauses. We remark that our previous 
work~\cite{juba13} could not exploit such hypotheses in general, due to the
fact that we did not restrict our attention to the independent masking 
processes.

By contrast, the conditions required in our main theorem on the second 
hypothesis $\psi_1$ allow it to have an arbitrary error tolerance $\epsilon$
given as an input parameter to the algorithm. In this case, our algorithm is
essentially detecting when there exist any CNF formulas that suffice to complete
a proof of the query, that are $\epsilon$-close to being valid. For example, in
the simple topic model distributions, clauses encoding a rule that the presence
of one primary term implies that at least one out of $\Omega(\log 1/\epsilon)$ 
of the other primary terms for the same topic also appears is a $(1-
\epsilon)$-valid clause that might be useful in reasoning about such documents. 
The permission of an arbitrary $\epsilon$ flaw in the formulas here may be seen 
as analogous to the {\em agnostic learning} setting in traditional concept 
learning~\cite{kss94} and we feel that it is crucial to the utility of such 
algorithms in artificial intelligence, for example. The conditions we impose on 
this second hypothesis follow the conceptual developments from our previous 
work~\cite{juba13} and this groundwork merely serves as a starting point in the 
present setting. Still, for the reader's sake, we will briefly review why we 
deem the restrictions imposed on this hypothesis to be reasonable and how they 
lead to a tractable learning problem. Agnostic learning is notoriously hard, and
the main reason we avoid this problem is by declining to specify {\em any} 
explicit representation of a relation that is $\epsilon$-close to valid,%
\footnote{Surely, this is a step beyond merely specifying another explicit
representation that lies outside the target concept class as done in improper 
learning, and hence we count it as a separate issue.}
a problem that is conveniently irrelevant to our ultimate goal of answering a 
query here. The more serious difficulty that impacts the class of hypotheses we
ultimately ask to learn is that determining whether or not a formula is 
satisfied in the context of partial information can be intractable or even
impossible, depending on the level of generality; in our previous work
we observed that this leads to a serious obstacle since the ability to detect 
when queries are provable implies that we can detect that the premises are 
satisfied. As this may not be feasible, it is therefore unreasonable
to hope for an algorithm that utilizes {\em any} CNF that is $\epsilon$-close to
being valid as a hypothesis; what we {\em can} hope for instead (and is achieved
by the previous work) is that we can utilize hypotheses that could be 
efficiently verified to be $\epsilon$-close to being valid if only we had 
identified such a hypothesis. (Note that even in the usual complete information 
agnostic learning setting, as hard as it is, this condition is always 
satisfied.) The solution proposed in the previous work was to use a simple 
definition of {\em witnessed evaluation} that suffices to enable efficient 
certification that a formula is satisfied in a partial example. As illustrated 
in our warm-up, this definition moreover turns out to be strong enough to enable
solutions to the combined learning and reasoning problem.

The core of the proof of the main theorem is a structural result for resolution
proofs over correlation gap distributions: every formula with a polynomial-size 
resolution refutation (possibly using clauses that hold over the background 
distribution as additional premises) simplifies to a logarithmic-width 
refutation using logarithmic-width clauses that are learnable from the 
background distribution. Essentially, we only need to focus on clauses in the 
proof that cannot be learned, and the key observation is that the only way a 
clause of sufficient (logarithmic) width could not be witnessable (and so 
unlearnable) is if many of its literals are biased to be false; this may happen 
if the literals of the clause appear in some even-parity constraint, for 
example. There is then some other (small) learnable clause that can be used to 
eliminate each such literal, and so by a sequence of resolution steps, we can 
reduce the wide clauses of the proof down to logarithmic width. We similarly 
establish the learnability of the nearly-realizable $\psi_0$ by showing that we 
only need to consider logarithmic-width subclauses for the proof. These 
techniques actually also allow us to determine whether or not a given CNF query 
is highly valid directly: In Section~\ref{perfect-cnf}, we exhibit an algorithm 
that distinguishes when an input CNF is satisfied with high probability over the
background distribution from when it is falsified with some moderate 
probability, without reference to resolution proofs. This latter result may be 
understood as lifting the restrictions on the CNF query in Khardon and Roth's 
work~\cite{kr99}, given some new restrictions on the distribution and obscuring 
of information.\footnote{%
By contrast, our main theorem essentially certifies the validity of DNF queries,
which seems to require the existence of simple proofs in the context of 
incomplete information---a further discussion of this point appears in 
Section~\ref{pac-res-bg}, in the context of the formal set-up.} 
These techniques for learning to reason about CNFs from partial information are 
new to this work, and seem to rely on both the correlation gap and the relative 
simplicity of the process that hides information.

\section{Our setting: background and approach}\label{pac-res-bg}

\paragraph{PAC-Semantics.}
PAC-Semantics (for a logic) was introduced by Valiant~\cite{valiant00} to
capture the kind of validity possessed by statements produced by PAC-learning
algorithms; for example, if one uses a PAC-learning algorithm to learn a 
conjunction $\wedge_j x_{i_j}$ to match the ``labels'' given by another variable
$x_t$ from examples from a distribution $D$, then the formula $[\wedge_jx_{i_j}
\equiv x_t]$ is (probably) approximately valid in the following sense:

\begin{definition}[$(1-\epsilon)$-valid]
Given a distribution $D$ over $\{0,1\}^n$, we say that a Boolean formula 
$\varphi$ is {\em $(1-\epsilon)$-valid} if 
$\Pr_{x\in D}[\varphi(x)=1]\geq 1-\epsilon$. If $\epsilon=0$, we say that the
formula is {\em perfectly valid}.
\end{definition}

Of course, the definition makes sense for other kinds of formulas (not just
equivalences). It is not hard to show (by a union bound) that any classical 
logical inference can be applied to formulas possessing this weaker kind of 
validity, as long as we allow for further loss in the approximation.\footnote{%
It also is not hard to show that as long as the distributions are arbitrary, the
union bound is tight here~\cite{juba13}.}

\begin{proposition}[Classical reasoning in PAC-Semantics~\cite{juba13}]
\label{classical-inf-bound}
Let $\psi_1,\ldots,\psi_k$ be formulas such that each $\psi_i$ is
$(1-\epsilon_i)$-valid under a common distribution $D$ for some $\epsilon_i\in
[0,1]$. Suppose that $\{\psi_1,\ldots,\psi_k\}\models\varphi$ (in the classical
sense). Then $\varphi$ is $1-\epsilon'$-valid under $D$ for
$\epsilon'=\sum_i\epsilon_i$.
\end{proposition}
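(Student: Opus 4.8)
The plan is to prove this by a straightforward union bound over the ``bad'' events on which the individual premises fail. First I would fix the distribution $D$ over $\{0,1\}^n$ and, for each $i\in\{1,\ldots,k\}$, define the bad set $B_i=\{x\in\{0,1\}^n:\psi_i(x)=0\}$. By the definition of $(1-\epsilon_i)$-validity we have $\Pr_{x\in D}[x\in B_i]=\Pr_{x\in D}[\psi_i(x)=0]\le\epsilon_i$.

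Next I would argue that the failure set of $\varphi$ is contained in $\bigcup_{i=1}^k B_i$. Indeed, suppose $x\notin\bigcup_i B_i$; then $\psi_i(x)=1$ for every $i$, i.e., $x$ is a (classical) satisfying assignment of the premise set $\{\psi_1,\ldots,\psi_k\}$. Since $\{\psi_1,\ldots,\psi_k\}\models\varphi$ in the classical sense, every such assignment also satisfies $\varphi$, so $\varphi(x)=1$. Contrapositively, $\varphi(x)=0$ implies $x\in B_i$ for some $i$, which is exactly the claimed containment $\{x:\varphi(x)=0\}\subseteq\bigcup_{i=1}^k B_i$.

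Finally, applying the union bound gives
\[
\Pr_{x\in D}[\varphi(x)=0]\ \le\ \Pr_{x\in D}\Bigl[x\in\textstyle\bigcup_{i=1}^k B_i\Bigr]\ \le\ \sum_{i=1}^k\Pr_{x\in D}[x\in B_i]\ \le\ \sum_{i=1}^k\epsilon_i\ =\ \epsilon',
\]
so $\Pr_{x\in D}[\varphi(x)=1]\ge 1-\epsilon'$, which is precisely the assertion that $\varphi$ is $(1-\epsilon')$-valid under $D$. (When $\epsilon'\ge 1$ the conclusion is vacuous and there is nothing to prove, so we may assume $\epsilon'<1$.)

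There is no real obstacle here: the only point requiring any care is the translation between the semantic notion of classical entailment ($\models$) and the pointwise statement ``every assignment satisfying all $\psi_i$ satisfies $\varphi$,'' which is just the definition of $\models$ for propositional formulas over the variables $x_1,\ldots,x_n$. The tightness remark in the accompanying footnote would be handled separately by exhibiting a distribution on which the bad sets $B_i$ are disjoint, but that is not needed for the statement itself.
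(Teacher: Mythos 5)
Your proof is correct and follows exactly the route the paper intends: the paper states this proposition as an import from prior work and justifies it only with the remark that it follows ``by a union bound,'' which is precisely your argument (the failure set of $\varphi$ is contained in the union of the premises' failure sets, by the definition of classical entailment). Nothing further is needed.
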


The main problem that we wish to address, introduced in prior work~\cite
{juba13}, is that of deciding the degree of validity of a given {\em query}
formula using {\em examples} from an unknown distribution. In the present work, 
we will see how to obtain stronger results (than \cite{juba13} in particular) 
for a moderately restricted class of distributions, in which the correlation 
between variables is either strong or weak (and not of moderate strength):

\begin{definition}[Correlation gap]
We will say that a distribution $D$ over $\{0,1\}^n$ has a {\em width-$w$
$(\beta,1-\gamma)$ correlation gap} if for any conjunction of $k\leq w$ literals
$\ell_1\wedge\ldots\wedge\ell_k$ satisfied with nonzero probability and any 
variable $x$,
either $\Pr[x=1|\ell_1\wedge\cdots\wedge\ell_k]\geq\beta$ and $\Pr[x=0|\ell_1
\wedge\cdots\wedge\ell_k]\geq\beta$ or else for some $b\in\{0,1\}$, $\Pr[x=b|
\ell_1\wedge\cdots\wedge\ell_k]\geq 1-\gamma$. In the former case, we say that 
$x$ is {\em $\beta$-balanced for $\ell_1\wedge\cdots\wedge\ell_k$,} 
and in the latter we say that, respectively, $x$ (for $b=
1$) or $\neg x$ (for $b=0$) is {\em $1-\gamma$-implied by $\ell_1\wedge\cdots
\wedge\ell_k$,} denoted as $\ell_1\wedge\cdots\wedge\ell_k\sugg x$ (or $\ell_1
\wedge\cdots\wedge\ell_k\sugg\neg x$, respectively).
\end{definition}

A simple example of a large collection of distributions that have a very strong 
correlation gap of $(1/2,1)$ up to width $n$ and serve as a motivating example, 
is the class of {\em affine distributions}:

\begin{definition}[Affine distribution]
For any solvable linear system over $\bbF_2$ $Ax=b$, the distribution over
$\{0,1\}^n$ that is uniform over solutions to the linear system is
an {\em affine distribution.}
\end{definition}

Affine distributions turn out to have a correlation gap, since intuitively,
if a conjunction determines the value of another variable in a linear 
constraint, that literal is easily seen to be $1$-implied, and otherwise the
literal turns out to be uniformly distributed (and thus, $1/2$-balanced). 

\begin{definition}[Constraints on clauses]
We say that there is a {\em constraint} on a clause $C$ in an affine 
distribution given by the linear system $Ax=b$ if there is a linear combination
of the rows of $A$ such that the only nonzero entries are in indices $i$ for
which the corresponding variable appears in a literal of $C$. 
\end{definition}

\begin{lemma}
\label{noconsts}
Let $C$ be a clause and $D$ be an affine distribution such that there are no
constraints on $C$. Then the marginal distribution over the variables appearing
in $C$ is uniform.
\end{lemma}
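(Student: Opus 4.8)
The plan is to reduce the statement to a standard fact about projections of affine subspaces of $\bbF_2^n$. Write the linear system defining $D$ as $Ax=b$, fix a particular solution $x_0$, and set $W=\ker A$, so that $\supp(D)$ is the coset $x_0+W$ and $D$ is uniform on it. Let $T\subseteq\{1,\ldots,n\}$ be the set of indices of variables occurring in $C$, so that the marginal of $D$ over the variables of $C$ is the push-forward of the uniform distribution on $x_0+W$ under the coordinate projection $\pi_T$. First I would observe that this push-forward is automatically uniform on its support $\pi_T(x_0+W)=\pi_T(x_0)+\pi_T(W)$: the restriction of $\pi_T$ to $x_0+W$ maps onto this set, and the fiber over each point is a coset of $W\cap\{x:\supp(x)\subseteq\overline{T}\}$, so all fibers have equal cardinality. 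Hence it suffices to prove that $\pi_T(W)=\bbF_2^T$, i.e., that the projection of the solution space onto the $C$-variables is everything.

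Next I would establish $\pi_T(W)=\bbF_2^T$ by duality. Over $\bbF_2$ we have $W^\perp=\operatorname{rowspace}(A)$ (valid over any field, from $\operatorname{rowspace}(A)\subseteq(\ker A)^\perp$ together with a dimension count). A vector $v\in\bbF_2^T$, extended by zeros outside $T$ to $\tilde v\in\bbF_2^n$, is orthogonal to $\pi_T(W)$ iff $\tilde v\cdot z=0$ for every $z\in W$, i.e., iff $\tilde v\in W^\perp=\operatorname{rowspace}(A)$. Thus $\pi_T(W)^\perp$ is precisely the set of vectors in the row space of $A$ whose support is contained in $T$, and a nonzero such vector is exactly a constraint on $C$ in the sense of the preceding definition. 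By hypothesis there is no constraint on $C$, so $\pi_T(W)^\perp=\{0\}$, whence $\pi_T(W)=\bbF_2^T$, and the marginal is uniform on $\bbF_2^T$, as claimed.

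The one point I would be careful to spell out — and the only place where the definitions need to be matched up rather than just linear algebra applied — is the identification of ``constraint on $C$'' with ``nonzero row-space vector supported on $T$''. Here one uses that the system $Ax=b$ is solvable: a combination $y$ of rows with $y^\top A=0$ also has $y^\top b=0$ and so is the zero vector of the row space, contributing nothing; hence the no-constraint hypothesis genuinely asserts that $\operatorname{rowspace}(A)$ contains no nonzero vector supported on $T$. Everything else is routine linear algebra over $\bbF_2$, so I do not anticipate any real obstacle beyond stating this correspondence cleanly.
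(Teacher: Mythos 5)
Your proof is correct, and its skeleton matches the paper's: both arguments reduce the lemma to showing (i) that every assignment to the $C$-variables extends to a solution of $Ax=b$, and (ii) that the number of extensions is the same for every such assignment. The difference is in how the key step (i) is obtained. The paper fixes an assignment $y^*$ to the $C$-variables, writes the residual system as $A''z=(b+A'y^*)$, and argues via reduced form that inconsistency would yield a linear combination of the rows of $A''$ summing to zero, whose counterpart in $Ax=b$ is a constraint on $C$; step (ii) is then just the remark that $A''$ does not depend on $y^*$, so every consistent residual system has $|\ker A''|$ solutions. You instead view the marginal as the push-forward of the uniform distribution on the coset $x_0+\ker A$ under $\pi_T$, get (ii) for free from the coset structure of the fibers, and prove (i) by duality: $(\ker A)^\perp=\operatorname{rowspace}(A)$ (your dimension count is legitimate over $\bbF_2$, since the standard bilinear form is nondegenerate even though self-orthogonal vectors exist), so $\pi_T(\ker A)^\perp$ is exactly the set of row-space vectors supported on the $C$-coordinates, which the no-constraint hypothesis makes trivial. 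The two derivations are essentially dual to one another, so neither is more general, but yours is cleaner on one point the paper glosses over: you make explicit that, by solvability of $Ax=b$, a combination of rows summing to the zero vector carries no information, so ``no constraints on $C$'' must be read as ``no nonzero row-space vector supported on the variables of $C$'' --- the reading under which the lemma is nonvacuous, and the one implicitly used in the paper's own inconsistency argument.
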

\begin{proof}
Let $Ax=b$ be the linear system defining $D$. Let $y$ denote the variables in 
$C$ and $z$ denote the variables not appearing in $C$, and let $A'$ be the
submatrix of $A$ formed by columns corresponding to variables in $C$, and $A''$ 
be the submatrix formed by columns of $A$ corresponding to variables not in $C$.
Note that by converting the matrix into reduced form, we may verify that if for
some $y^*$ there were no solutions to the system $A''z=(b+A'y^*)$, then there
must be some linear combination of the rows of $A''$ yielding the $0$ vector.
The corresponding linear combination of the rows of the linear system $Ax=b$
would be a constraint on $C$, which does not exist by hypothesis. Therefore,
for every $y^*$, the system $A''z=(b+A'y^*)$ has solutions, and furthermore,
since $A''$ is the same for every assignment $y^*$, it always has the same 
number of solutions. Thus, as $D$ is uniform over these solutions (and $y^*$),
the marginal distribution over the variables in $C$ is indeed uniform as 
claimed.
\end{proof}

Correlation gap distributions are reasonably natural. For example, the
simple {\em ``pure document''} probabilistic corpus model introduced by
Papadimitriou et al.~\cite{prtv00} to analyze Latent Semantic Indexing generally
features a nontrivial correlation gap:

\begin{example}[Pure document topic model~\cite{prtv00}]
The {\em pure document topic model} is a probabilistic model of document
generation. We suppose that documents are represented by the set of words
appearing in them. A document is then generated in a two-stage process in which
a (latent) {\em topic variable} is first sampled, where each outcome of the
topic variable is associated with one member of a family of disjoint sets of 
{\em primary terms}. The set of words associated with this topic is obtained by
the union of these primary terms with a set of generic terms (shared across 
topics); each word has an associated probability in the range $[\epsilon,\tau]$ 
for some small constants $\epsilon$ and $\tau$. The document itself is then 
sampled by independently tossing a biased coin for each word in the topic set, 
including it with its given probability. If the overall probability of the topic
words appearing in the document is at most $\delta\ll\epsilon$, which holds, for
example, if each topic has sufficiently small probability of being chosen 
(relative to $\epsilon/\tau$), then the distribution has a width-$w$ $(\epsilon,
1-\delta(1+\delta w))$-correlation gap.
\end{example}

\paragraph{Partial assignments.}
Answering queries using (complete) examples drawn from a distribution $D$ is
trivially accomplished. By contrast, the task is much more difficult when some
of the variables' settings are deleted from the examples, resulting in 
{\em partial} example assignments. In the present work, we will focus on a very
simple process for generating partial assignments that chooses whether to delete
each entry from the assignment by tossing a biased coin. In learning theory,
such a model first appeared in the work of Decatur and Gennaro~\cite{dg95}.
Formally:

\begin{definition}[Partial assignments and masking]
A {\em partial assignment} $\rho$ is an element of $\{0,1,*\}^n$. We say that a
partial assignment $\rho$ is {\em consistent} with an assignment $x$ if whenever
$\rho_i\neq *$, $\rho_i=x_i$.

A {\em mask} is a function taking assignments to consistent partial assignments.
A {\em masking process} is a mask-valued random variable. We denote the 
distribution obtained by applying a masking process $M$ to a distribution over
assignments $D$ by $M(D)$.

The {\em independent masking process with parameter $\mu$} is the following
masking process $M_\mu$: for every $x\in\{0,1\}^n$, for every $i$, $M_\mu(x)_i=
x_i$ independently with probability $\mu$, and otherwise $M_\mu(x)_i=*$.
\end{definition}


We note that a related set-up appears in recent works by Dvir et al.~\cite
{drwy12}, Wigderson and Yehudayoff~\cite{wy12}, and Moitra and Saks~\cite{ms13} 
with the latter result(s) being particularly relevant: in our terms, they 
essentially show that there is a polynomial time algorithm (in $|\supp(D)|$,
$\frac{1}{\epsilon},\log\frac{1}{\delta},$ and $n$, for constant $\mu\in (0,1]$)
that for any distribution $D$ on $\{0,1\}^n$, completely recovers $D$ to within 
an additive $\epsilon$ at each point in $\supp(D)$ given access to $M_\mu(D)$. 
As a consequence, they obtain (in particular) an algorithm for answering queries
that is efficient when $D$ has sparse support: we simply recover the 
distribution (to within an additive $\epsilon=\gamma/|\supp(D)|$ error, say) and
add up the mass assigned to points where the formula is true to obtain an 
additive $\gamma$ estimate of the degree of validity of the query under $D$.

We will be interested in evaluating queries for distributions that have 
exponentially large support, such as affine distributions. In order to do
so, we will need to impose further restrictions on the formula, namely (looking
ahead) that the formula has a small resolution proof from clauses that are 
either always true or merely {\em witnessed} to be true on most examples from
$M_\mu(D)$:

\begin{definition}[Witnessing]
We define a clause to be {\em witnessed to evaluate to true} on a partial 
assignment if there is some literal $\ell(x_i)$ ($x_i$ or $\neg x_i$) in the 
clause such that ($\rho_i\neq *$ and) $\ell(\rho_i)=1$. We say that the clause 
is {\em witnessed to evaluate to false} if for every literal $\ell(x_i)$ in the 
clause, ($\rho_i\neq *$ and) $\ell(\rho_i)=0$.
\end{definition}

A very natural and related notion is using the partial assignment as a 
{\em restriction} to simplify a formula by ``plugging in the known variables.''
For the cases of interest (CNFs here) this operation can be simply captured by 
the following definition.

\begin{definition}[Restricted formula]
Given a partial assignment $\rho$ and a clause $C$, the {\em restriction of $C$
under $\rho$,} denoted $C|_\rho$, is the tautological formula $\top$ if $C$
is witnessed to evaluate to true on $\rho$, and otherwise it is given by the
set of literals $\ell(x_i)$ in $C$ for which $\rho_i=*$. For a CNF $\varphi$, if
some clause is witnessed to evaluate to false, then the restriction of $\varphi$
under $\rho$ is defined to be the contradictory formula $\bot$; otherwise, it is
the conjunction of restrictions of clauses not witnessed to evaluate to true.
\end{definition}



\paragraph{Theorem-proving versus other approaches to answering queries.}
When a (small) resolution proof exists, this implies that the query has the form
of a DNF formula. This naturally raises the question of whether techniques 
similar to those used by Wigderson and Yehudayoff~\cite{wy12} (in particular) 
could be applied to the partial assignments corresponding to the DNF's terms to 
evaluate such a DNF directly, without turning to a resolution proof. The 
difficulty that arises with such an approach is that when many DNF terms 
correspond to partial assignments that are consistent with one another, 
determining the weight of their disjunction by such techniques seems to involve 
an (exponentially large) inclusion-exclusion expression.\footnote{%
Wigderson and Yehudayoff avoid the need for inclusion-exclusion by only
considering complete assignments, which of course correspond to disjoint
events.}
While one could try to bound the number of terms in the sum by considering only 
terms of bounded width using the techniques we develop here (cf. Lemmas \ref
{wide-witnessed} and \ref{consconsts} and the approach of Theorem~\ref
{mainthm}), decreasing the error per omitted term by increasing the width bound 
increases the number of terms at a greater rate, so that a nontrivial bound on 
the error cannot be obtained. By contrast, the existence of a small a priori 
bound on the size of a resolution proof of the formula enables us to derive a 
related bound on the width we need to consider (naturally, this is made precise 
in the proof of Theorem~\ref{mainthm}).

Khardon and Roth~\cite{kr99}, in their work on ``learning to reason,'' had a
different approach to an essentially similar problem---among others, they 
considered a problem where one is given a distribution over partial assignments
that are consistent with satisfying assignments of some background formula
(generally a polynomial-size DNF), and they wish to decide whether the query 
formula is either entailed by the background formula or whether it is falsified 
with some given probability (promised that one of these two cases holds). When
the query is a $k$-CNF (say for $k=O(\log n)$), they answer the query by
constructing a relatively small set of partial assignments and testing whether
the query has an unsatisfied clause on any of the partial assignments; this
approach again answers queries while avoiding any consideration of 
theorem-proving. The size of the set (and therefore also the running time) is 
polynomial in $n$ and exponential in $k$, and so it is quasipolynomial for $k=
O(\log n)$. It turns out that our work can be used to extend their scheme to 
answer general CNF queries under distributions with strong correlation gaps and 
independent masking by first reducing the width of the (important) clauses in 
the query to $O(\log n)$, as we show in Section~\ref{perfect-cnf}. The reader 
should note that this approach only works for testing whether the query is {\em 
almost perfectly} valid.\footnote{%
Although the masking process is simple enough to allow us to compute 
(e.g.) unbiased estimates of the probability that a given (set of) clause(s) is
witnessed unsatisfied, recovering an estimate of the true probability that the
formula is unsatisfied seems to again involve a potentially exponentially large 
inclusion-exclusion calculation.}

We stress that these two approaches apply to largely orthogonal classes of 
formulas: the latter approach answers CNF queries, whereas the former, 
resolution-based approach answers DNF queries. For both approaches, there is an 
important asymmetry: in the first approach, it is the existence of a small 
resolution proof that restricts the class of formulas that can be used as 
hypotheses and queries, whereas for the second, it is a matter of the asymmetric
way the error arises when testing the query on a partial assignment.

\paragraph{Resolution.}
We now briefly review the resolution proof system, which plays a central role in
this work. Resolution is a proof system for establishing the {\em validity} of a
{\em DNF}: it is given by a {\em refutation} of its (CNF) negation.

\begin{definition}[Resolution]
A {\em resolution refutation} of a CNF $\varphi$ is given by a sequence of 
clauses $\{C_i\}_{i=1}^k$ such that $C_k$ is the (unsatisfiable) empty clause
$\bot$ and for each $C_i$ in the proof, one of the following holds:
\begin{compactenum}
\item {\em (Axiom)} $C_i$ is a clause of $\varphi$
\item {\em (Weakening)} some $C_j$ for $j<i$ is a subclause of $C_i$
\item {\em (Cut)} there exist clauses $C_j=D\vee x$ and $C_\ell=E\vee\neg x$ 
(for some variable $x$) with $j,\ell<i$ such that $C_i=D\vee E$. The literals
$x$ and $\neg x$ are said to be a {\em complementary pair}.
\end{compactenum}
\end{definition}

Note that resolution is sound: each $C_i$ derived in a step of the proof is a
consequence of $\varphi\wedge C_1\wedge\cdots\wedge C_{i-1}$. Thus, the 
derivation of the empty clause implies that $\varphi$ is unsatisfiable.
It is standard to note that weakening is unnecessary in a resolution refutation:
if we simulate a resolution proof without using the weakening rule (applying the
cut rule to the original clauses instead of weakened ones at each step) we end
up with a legal derivation that has a subclause of the original derivation at 
each step, which must also end with the empty clause. 
We include weakening as a derivation rule because it is useful in the analysis
of resolution. In particular, we will be interested in the result of ``plugging
in'' a partial assignment to each step of a resolution refutation:

\begin{definition}[Restricted proof]
Given a resolution refutation $\Pi$ and partial assignment $\rho$, the 
{\em restriction} of $\Pi$ under $\rho$, denoted $\Pi|_\rho$, is the proof
obtained by substituting $C|_\rho$ for each clause $C$ appearing in $\Pi$.
\end{definition}

Using the weakening rule, the following (folklore) observation is easily 
established.

\begin{proposition}[Resolution is restriction-closed]
\label{res-closed}
For any CNF $\varphi$ with a resolution refutation $\Pi$ and any 
partial assignment $\rho$, the restriction $\Pi|_\rho$ is a resolution 
refutation of $\varphi|_\rho$.
\end{proposition}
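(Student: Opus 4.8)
The plan is to induct over the sequence of clauses $\Pi = C_1,\ldots,C_k$ and show that for each $i$ the restricted clause $C_i|_\rho$ is either the trivial formula $\top$ — in which case I simply delete the step — or is obtainable from the surviving earlier clauses $C_1|_\rho,\ldots,C_{i-1}|_\rho$ by a legal resolution rule whose axioms are clauses of $\varphi|_\rho$. Since the empty clause has no literals, $\bot|_\rho=\bot$, so $C_k|_\rho$ is again the empty clause; hence once the rule-by-rule check goes through, $\Pi|_\rho$ (with its vacuous $\top$ steps excised) is a genuine resolution refutation of $\varphi|_\rho$. Before that I would dispose of the degenerate case: if some clause of $\varphi$ is witnessed false under $\rho$, then by definition $\varphi|_\rho=\bot$ is the CNF consisting of the empty clause, and the corresponding axiom step of $\Pi$ already restricts to the empty clause, so $\Pi|_\rho$ trivially refutes $\varphi|_\rho$; henceforth I assume no clause of $\varphi$ is witnessed false.

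Then I would proceed rule by rule. For an \emph{axiom} $C_i$ of $\varphi$: either it is witnessed true, so $C_i|_\rho=\top$ and the step is dropped, or it is not, in which case $C_i|_\rho$ is by definition exactly one of the clauses constituting $\varphi|_\rho$. For a \emph{weakening} step with $C_j\subseteq C_i$ and $j<i$: restriction only ever deletes literals assigned by $\rho$ or returns $\top$ when some literal is witnessed true, so if neither $C_j|_\rho$ nor $C_i|_\rho$ is $\top$ then $C_j|_\rho\subseteq C_i|_\rho$ and weakening still applies; and if $C_j|_\rho=\top$ then the witnessed-true literal of $C_j$ also lies in $C_i$, forcing $C_i|_\rho=\top$, so the step is dropped. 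For a \emph{cut} deriving $C_i=D\vee E$ from $C_j=D\vee x$ and $C_\ell=E\vee\neg x$: if $\rho$ leaves the variable of $x$ unassigned, the same cut applied to $C_j|_\rho$ and $C_\ell|_\rho$ yields $D|_\rho\vee E|_\rho=C_i|_\rho$ (unless a parent is $\top$, which then forces $C_i$ witnessed true and the step dropped, or $C_i|_\rho=\top$ itself); if instead $\rho$ sets that variable so the literal $x$ is satisfied, then $C_j|_\rho=\top$ while in $C_\ell$ the literal $\neg x$ is witnessed false and is deleted, so $C_\ell|_\rho$ is a subclause of $E|_\rho$, itself a subclause of $(D\vee E)|_\rho=C_i|_\rho$, and $C_i|_\rho$ follows from $C_\ell|_\rho$ by a single weakening — the symmetric situation holds if $\neg x$ is satisfied.

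The hard part — really the only point of substance — is this last case: under a restriction a cut can \emph{degenerate}, because one of the two cut literals has been assigned, and the conclusion is then no longer produced by a cut but must be recovered as a weakening of the surviving parent, which is precisely why the definition of a resolution refutation includes the weakening rule. One also has to keep the $\top$-bookkeeping consistent throughout: I should verify that whenever a parent of a step restricts to $\top$, the conclusion either also restricts to $\top$ or is a weakening of the other parent, so that ``delete every $\top$ step'' leaves a well-formed proof. Everything else is a routine unwinding of the definitions of $C|_\rho$ and $\varphi|_\rho$.
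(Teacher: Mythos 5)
Your proof is correct and is precisely the folklore argument the paper alludes to (the paper states the proposition without proof, remarking only that it is ``easily established'' using the weakening rule): the only substantive case is a cut whose resolved variable is set by $\rho$, which you correctly absorb as a weakening of the surviving parent after discarding the $\top$ steps. The lone small inaccuracy is in your degenerate case where $\varphi|_\rho=\bot$: the witnessed-false clause of $\varphi$ need not actually appear in $\Pi$, but the conclusion is trivial there regardless, since the empty clause is then available as an axiom of $\varphi|_\rho$ and every remaining step follows from it by weakening.
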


We will see the value of Proposition~\ref{res-closed} illustrated simply in 
Section~\ref{unif}. (It will, of course, also play a key role in establishing
the main theorem, Theorem~\ref{mainthm}.)

\section{Automatizability of resolution}\label{gap-proof}

\subsection{Width-based automatizability for the uniform distribution}
\label{unif}

We start with the simpler special case of the uniform distribution (which we
denote $U_n$), which will introduce a useful lemma for the general case, and 
help develop some intuition. The key observation here is that clauses that are 
sufficiently wide are almost always witnessed to evaluate to true:

\begin{lemma}
\label{wide-witnessed}
Let $C$ be a clause such that for any literal $\ell$ of $C$ and any 
subclause $C'$ of $C$ without $\ell$, $\ell$ is $\beta$-balanced for $\neg C'$, 
and suppose $C$ has width at least $\frac{1}{\mu\beta}\ln\frac{1}{\delta}$. 
Then $C$ is witnessed to evaluate to true on $M_\mu(D)$ with probability $1-
\delta$.
\end{lemma}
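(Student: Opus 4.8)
The plan is to condition on a satisfying assignment $x\in D$ and the independent mask $M_\mu$, and bound the probability that the clause $C$ is \emph{not} witnessed satisfied, i.e., that no literal of $C$ is both revealed by the mask and set to $1$ by $x$. The natural route is to expose the literals of $C$ one at a time — in an order that lets us use the $\beta$-balancedness hypothesis — and show that each literal independently has a constant-ish probability of ``catching'' the clause. Write $C = \ell_1 \vee \cdots \vee \ell_w$ with $w \ge \frac{1}{\mu\beta}\ln\frac{1}{\delta}$. For each $j$, let $C'_j = \ell_1 \vee \cdots \vee \ell_{j-1}$ be the prefix before $\ell_j$; the hypothesis says $\ell_j$ is $\beta$-balanced for $\neg C'_j = \neg\ell_1 \wedge \cdots \wedge \neg\ell_{j-1}$.

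**First I would** set up the failure event precisely. $C$ fails to be witnessed satisfied on $M_\mu(x)$ exactly when for every $j$, either $\ell_j$ is hidden by the mask or $\ell_j(x) = 0$. Now expose the coordinates underlying $\ell_1, \ell_2, \ldots$ in order. At step $j$, condition on everything revealed so far; since we are in the failure event, all of $\ell_1, \ldots, \ell_{j-1}$ evaluated to false under $x$ (those that weren't hidden), but more importantly I want the cleanest argument: condition on the \emph{values} $x_{i_1}, \ldots, x_{i_{j-1}}$ of the variables underlying the first $j-1$ literals being such that $\ell_1 = \cdots = \ell_{j-1} = 0$, i.e. on the event $\neg C'_j$. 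By the $\beta$-balanced hypothesis, $\Pr_{x\in D}[\ell_j(x) = 1 \mid \neg C'_j] \ge \beta$. The mask reveals coordinate $i_j$ independently with probability $\mu$, independently of $x$. Hence, conditioned on the first $j-1$ literals all being falsified (by revealed-and-false or by value), the conditional probability that $\ell_j$ is revealed \emph{and} true — which would take us out of the failure event — is at least $\mu\beta$. One subtlety: distinct literals of $C$ may share the same underlying variable; I would note (as is standard) that if $C$ contains both $x_i$ and $\neg x_i$ it is trivially a tautology and witnessed satisfied whenever $x_i$ is revealed, so WLOG the literals are on distinct variables, and the exposures are genuinely on fresh coordinates.

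**Putting this together**, I would argue by a telescoping/martingale-style product: $\Pr[\text{fail}] = \prod_{j=1}^{w} \Pr[\ell_j \text{ not revealed-true} \mid \ell_1,\dots,\ell_{j-1} \text{ not revealed-true}] \le (1-\mu\beta)^{w} \le e^{-\mu\beta w} \le e^{-\ln(1/\delta)} = \delta$, using $1-t\le e^{-t}$ and the width bound $w \ge \frac{1}{\mu\beta}\ln\frac{1}{\delta}$. Therefore $C$ is witnessed to evaluate to true on $M_\mu(D)$ with probability at least $1-\delta$, as claimed.

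**The main obstacle** — or rather the one place that needs care rather than being routine — is justifying the per-step bound $\Pr[\ell_j \text{ revealed and true} \mid \text{first } j-1 \text{ literals not revealed-true}] \ge \mu\beta$ when the conditioning event ``first $j-1$ literals not revealed-true'' is a mixture over which of those literals were hidden versus revealed-false. I would handle this by further conditioning on the full pattern of hides/reveals and $x$-values on the first $j-1$ coordinates: on each such outcome inside the failure event, the $x$-values of those coordinates are consistent with $\neg C'_j$ (a literal that is revealed-false has that value; a hidden literal we can marginalize over, and $\beta$-balancedness is stated for the conditioning $\neg\ell_1\wedge\cdots\wedge\neg\ell_{j-1}$, which still applies after averaging over any subset of those coordinates being left free — more carefully, I would condition only on the sub-conjunction of literals actually revealed false, which has width $\le w$, invoke $\beta$-balancedness of $\ell_j$ for that shorter conjunction, and use that the mask on coordinate $i_j$ is independent of everything). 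Since $\mu$ and the $\beta$-balanced lower bound hold uniformly over all these sub-cases, the bound survives the averaging, and the product bound goes through.
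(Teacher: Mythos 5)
Your proof is correct and follows essentially the same route as the paper's: expose the literals of $C$ sequentially, use the independence of the mask together with $\beta$-balancedness of each literal with respect to the conjunction of previously exposed (revealed-and-false) literals to get a per-step failure probability of at most $1-\mu\beta$, and conclude with $(1-\mu\beta)^w\leq e^{-\mu\beta w}\leq\delta$. Your extra care about averaging over the hidden/revealed pattern (conditioning only on the sub-conjunction of revealed-false literals, which the hypothesis covers since it allows \emph{any} subclause) is exactly the point the paper's proof handles by exposing the mask first, so the two arguments coincide in substance.
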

\begin{proof}
We note that $\rho$ drawn from $M_\mu(D)$ are of the form $\rho=m(x)$ for $m$
drawn from $M_\mu$ and $x$ drawn from $D$ independently. Suppose we construct
a sequence of literals and subclauses of $C$ as we sample $m$ and $x$ in the 
following way: put $C_0=\bot$, and fix the entries of $m$ in order until we 
encounter some unmasked entry corresponding to some literal $\ell_i$ in $C$; we 
then put $\ell_i$ in $C_{i+1}=C_i\vee\ell$. Each literal of $C$ is thus included
in some $C_i$ independently with probability $\mu$. Now, $C$ is witnessed true 
on $\rho$ precisely when some $\ell_i$ is set to true in $x$, and if the first 
$i-1$ literals are set to false in $x$, then since $\ell_i$ is $\beta$-balanced 
for $\neg C_i$, each $\ell_i$ is set to true in $x$ with probability at
least $\beta$ when the first $i-1$ are set to false. Thus, the probability that 
none of these literals in a clause of width $w$ is satisfied by $\rho$ is at 
most $(1-\mu\beta)^w$, which for $w\geq\frac{1}{\mu\beta}\ln\frac{1}{\delta}$ is
at most $\delta$.
\end{proof}

Since the uniform distribution is $1/2$-balanced, Lemma~\ref{wide-witnessed} 
establishes that any wide clauses that appear in a resolution refutation are 
witnessed to evaluate to true with high probability over $M_\mu(U_n)$, and hence
in a proof $\Pi$ of size $p(n)$, a union bound establishes that every clause of 
width $\Omega(\frac{1}{\mu}\log\frac{p(n)}{\delta}))$ is substituted by $\top$ 
in the restriction $\Pi|_\rho$. Thus, the width-based algorithm first studied
by Galil~\cite{galil77} that searches for refutations using clauses of width at 
most $w$ (and thus runs in time $O(n^{2w})$) finds a refutation when one exists
with probability $1-\delta$. The following theorem is then almost immediate:

\begin{theorem}\label{unifthm}
There is a quasipolynomial time algorithm that, given an input CNF $\varphi$,
size bound $p(n)$, $\mu,\epsilon,\gamma\in [0,1]$, and access to examples from 
$M_\mu(U_n)$, and given that either
\begin{compactitem}
\item $\varphi$ is satisfied by $U_n$ with probability at least $\epsilon+
\gamma$ or
\item there exists a CNF $\psi$ that is witnessed to evaluate to true with
probability at least $1-\epsilon+\gamma$ under $M_\mu(U_n)$ such that there is a
resolution refutation of size $p(n)$ of $\psi\wedge\varphi$
\end{compactitem}
decides which case holds with probability $1-\delta$.
\end{theorem}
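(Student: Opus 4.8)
The plan is: with high probability over a partial example $\rho\sim M_\mu(U_n)$, reduce the problem to deciding whether the restricted CNF $\varphi|_\rho$ has a \emph{narrow} resolution refutation, which Galil's algorithm~\cite{galil77} settles in time $n^{O(w)}$ for width bound $w$. Concretely, fix $w=\frac{2}{\mu}\ln\frac{2p(n)}{\gamma}$; draw $m=O(\frac{1}{\gamma^2}\log\frac{1}{\delta})$ independent partial examples $\rho^{(1)},\dots,\rho^{(m)}$ from $M_\mu(U_n)$; run the width-$w$ refutation search on each $\varphi|_{\rho^{(j)}}$; let $\hat q$ be the fraction of runs that return a refutation; and output ``refutation case'' iff $\hat q>1-\epsilon-\tfrac{\gamma}{4}$. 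Note that the algorithm never uses $\psi$: that hypothesis enters only in the analysis of the second case.

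For the refutation direction, suppose $\psi$ and a size-$p(n)$ refutation $\Pi$ of $\psi\wedge\varphi$ are as in case~(b). Because $U_n$ is a product distribution, every literal is $\tfrac12$-balanced for every conjunction of literals on other variables, so Lemma~\ref{wide-witnessed} with $\beta=\tfrac12$ and $\delta'=\tfrac{\gamma}{2p(n)}$ applies to each clause of $\Pi$: any clause of width at least $w$ is witnessed true on $M_\mu(U_n)$ with probability at least $1-\tfrac{\gamma}{2p(n)}$. A union bound over the at most $p(n)$ clauses of $\Pi$ shows that with probability at least $1-\tfrac{\gamma}{2}$ all width-$\geq w$ clauses of $\Pi$ restrict to $\top$ under $\rho$; combined (by a further union bound) with the event that every clause of $\psi$ is witnessed true, which has probability at least $1-\epsilon+\gamma$, both hold with probability at least $1-\epsilon+\tfrac{\gamma}{2}$. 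On that event $(\psi\wedge\varphi)|_\rho=\varphi|_\rho$, so by Proposition~\ref{res-closed} $\Pi|_\rho$ is a resolution refutation of $\varphi|_\rho$; deleting the steps that became $\top$ and then eliminating weakening (using the observation quoted above that weakening is unnecessary) leaves a refutation of $\varphi|_\rho$ every clause of which is a subclause of some $C_i|_\rho$ with $\width(C_i)<w$, hence a refutation of width $<w$. Thus Galil's search succeeds, and $\Pr_\rho[\text{search succeeds}]\geq 1-\epsilon+\tfrac{\gamma}{2}$.

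For the other direction, suppose case~(a) holds. If $x\sim U_n$ satisfies $\varphi$ and $\rho$ is consistent with $x$ (as every $\rho$ produced by $M_\mu$ is), then checking the definition of the restricted formula shows no clause of $\varphi$ is witnessed false under $\rho$ and $x$ still satisfies $\varphi|_\rho$, so $\varphi|_\rho$ is satisfiable and, by soundness of resolution, has no refutation of any width. Hence $\Pr_\rho[\text{search succeeds}]\leq\Pr_\rho[\varphi|_\rho\text{ unsatisfiable}]\leq\Pr_{x\sim U_n}[\varphi(x)=0]\leq 1-\epsilon-\gamma$. So the success probability differs between the two cases by at least $\tfrac{3\gamma}{2}$, with $1-\epsilon-\tfrac{\gamma}{4}$ lying strictly between $1-\epsilon-\gamma$ and $1-\epsilon+\tfrac{\gamma}{2}$ with margin $\tfrac{3\gamma}{4}$ on each side; a Chernoff bound shows that with $m=O(\tfrac{1}{\gamma^2}\log\tfrac1\delta)$ samples $\hat q$ is within $\tfrac{\gamma}{4}$ of the true success probability except with probability $\delta$, so the thresholding decides correctly with probability $1-\delta$. (Degenerate parameter regimes, e.g.\ where $1-\epsilon+\gamma>1$ makes case~(b) impossible, are handled by the same thresholding.) The running time is $m$ invocations of Galil's algorithm on $n$ variables with width bound $w=O(\tfrac1\mu\log\tfrac{p(n)}{\gamma})$, i.e.\ $n^{O(\frac1\mu\log\frac{p(n)}{\gamma})}\cdot\poly(\tfrac1\gamma,\log\tfrac1\delta)$, which is quasipolynomial.

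The one genuinely load-bearing step — and where I would be most careful — is the claim in the refutation direction that plugging $\rho$ into $\Pi$ and discarding the clauses that became $\top$ really yields a width-$<w$ refutation of $\varphi|_\rho$ \emph{alone}: one must check both that the $\psi$-axioms have disappeared (the ``$\psi$ witnessed true'' event) and that no wide intermediate clause survives, and then invoke Proposition~\ref{res-closed} together with the elimination of weakening to conclude that every surviving clause is a subclause of a clause of $\Pi$ and hence narrow. Everything else is a routine union-bound and Chernoff calculation.
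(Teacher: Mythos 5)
Your proof is correct and follows essentially the same route as the paper: restrict $\varphi$ by each sampled partial assignment, run the width-$w$ (Galil) search with $w=O(\frac1\mu\log\frac{p(n)}{\gamma})$, justify narrowness via Lemma~\ref{wide-witnessed} plus Proposition~\ref{res-closed}, and decide by estimating the success fraction. The only differences are bookkeeping (your width bound charges the wide-clause failure to $\gamma$ per example and you threshold at $1-\epsilon-\gamma/4$, whereas the paper charges it to $\delta$ over all $m$ examples and thresholds at an $\epsilon$ failure fraction), which does not change the argument.
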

\begin{proof}
The algorithm takes a sample of partial assignments of size $m=\frac{1}
{\gamma^2}\log\frac{1}{\delta}$ and for each partial assignment $\rho^{(i)}$ 
uses the width-based algorithm for $w=O(\frac{1}{\mu}\log\frac{m\cdot p(n)}
{\delta})$ to check for a width-$w$ refutation of $\varphi|_{\rho^{(i)}}$; if 
more than an $\epsilon$ fraction of these refutations fail, the algorithm 
rejects and otherwise it accepts. It is immediate that (for constant $\mu$) the 
algorithm runs in quasipolynomial time in $n$, $1/\gamma$, $\log 1/\epsilon$, 
and $1/\delta$. We thus turn to considering correctness.

If $\varphi$ is $\epsilon+\gamma$ valid, then Hoeffding's inequality shows
that with probability $1-\delta$, at least an $\epsilon$ fraction of the
assignments drawn from $U_n$ satisfy $\varphi$; since this satisfying assignment
$x^{(i)}$ is consistent with any partial assignment $\rho^{(i)}$ in the support
of $M_\mu(x^{(i)})$, $\varphi|_{\rho^{(i)}}$ is satisfiable, and hence by the 
soundness of resolution, no refutation exists for these partial assignments,
and we see that the algorithm rejects.

In the second case, we note first that (again by Hoeffding's inequality) with 
probability $1-\delta/2$, every clause of the unknown formula $\psi$ is 
witnessed to evaluate to true in at least a $1-\epsilon$-fraction of the
partial assignments. Let $\Pi$ be the size $p(n)$ refutation of $\psi\wedge
\varphi$. It then follows from Lemma~\ref{wide-witnessed} (cf. the 
above discussion) that with probability $1-\delta/2$, for this $1-\epsilon$
fraction of the $\rho^{(i)}$ (out of $m$), $\Pi|_{\rho^{(i)}}$ is a width-$w$ 
refutation of $\varphi|_{\rho^{(i)}}$ since the clauses from $\psi$ and clauses
of $\Pi$ of width greater than $w$ all simplify to $\top$. Thus, in this case, 
the algorithm accepts with probability $1-\delta$, as needed.
\end{proof}

\subsection{Augmenting the width-based algorithm with learning}
While the uniform distribution illustrates how the {\em reasoning} problem may 
become easier in the context of a distribution, in a sense there is no 
{\em learning} problem if the distribution is known to be uniform: a given
formula indicates which settings of the variables are ``positive'' or 
``negative'' examples, and the entire learning question for a given formula 
merely concerns whether the distribution assigns high weight to the negative 
examples. We now turn to considering our learning problem.

Distributions with a correlation gap turn out to be easy to work with because we
only need to consider narrow clauses: for starters, Lemma~\ref{wide-witnessed} 
guarantees that (sub)clauses of sufficient width ($\Omega(\frac{1}{\mu\beta}\log
\frac{1}{\delta})$ here) for which every variable is balanced in $D$ (for every
small subset of the rest) are witnessed with high probability. Naturally, in an
affine distribution, we can make a similar claim if all of the literals of a 
sufficiently wide clause are each involved in some constraint that $1$-implies 
one of them. More generally:

\begin{lemma}
\label{consconsts}
Let $C$ be a clause of width $2w$ for $w\geq\frac{1}{\mu}\ln\frac{1}{\delta}$ 
and $D$ be a distribution with a width-$w$ $(\beta,1-\gamma)$ correlation gap 
such that for every subclause of $C$ of size $w$ there is some further subclause
$C'\vee\ell$ with $\neg C'\sugg\ell$. Then with probability $1-\delta$, there is
an unmasked literal of $C$ that is true with probability $(1-\gamma)$ under $D$.
\end{lemma}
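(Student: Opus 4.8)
The plan is to reduce the statement to a purely combinatorial fact about the literals of $C$ and then apply the independence of the masking process, in the same spirit as the proof of Lemma~\ref{wide-witnessed}. Call a literal $\ell$ of $C$ \emph{implied} if there is a subclause $C'$ of $C$ with $\ell\notin C'$ and $|C'|\leq w-1$ such that $\neg C'\sugg\ell$; by the definition of $\sugg$ this means $\Pr_{x\in D}[\ell(x)=1\mid\neg C'(x)]\geq 1-\gamma$ (allowing $C'$ empty, so that $\ell$ may simply have unconditional bias $\geq 1-\gamma$), which is the sense in which such an $\ell$ is ``true with probability $(1-\gamma)$ under $D$.'' The first step is to observe that at least $w+1$ of the $2w$ literals of $C$ are implied. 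Indeed, the hypothesis says that every subclause $S$ of $C$ of size $w$ contains a subclause $C'\vee\ell$ with $\neg C'\sugg\ell$; this exhibits an implied literal $\ell\in S$, with witness $C'\subseteq S\setminus\{\ell\}$, so that $|C'|\leq w-1$ and the width-$w$ correlation gap indeed governs $\Pr[\ell\mid\neg C']$. Consequently the set of non-implied literals of $C$ contains no subclause of size $w$, hence has size at most $w-1$, so $C$ has at least $2w-(w-1)=w+1$ implied literals.

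Given this, the conclusion is immediate. Since $C$ is a non-tautological clause, its $2w$ literals lie on $2w$ distinct variables, so for $\rho$ drawn from $M_\mu(D)$ each literal of $C$ is unmasked independently with probability $\mu$. The probability that none of the (at least $w+1$) implied literals of $C$ is unmasked is therefore at most $(1-\mu)^{w+1}\leq(1-\mu)^{w}\leq e^{-\mu w}\leq\delta$, using $w\geq\frac{1}{\mu}\ln\frac{1}{\delta}$. Hence with probability $1-\delta$ some implied literal $\ell$ of $C$ is unmasked, and this $\ell$ is true with probability $1-\gamma$ under $D$ (conditioned on the associated $\neg C'$), as claimed.

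This argument has no genuinely hard step; the one point demanding care is the counting observation, and within it the requirement that the witnessing subclause $C'$ have width at most $w-1$, so that the width-$w$ correlation gap actually controls $\Pr[\ell\mid\neg C']$. This is precisely why the hypothesis is stated in terms of subclauses of the \emph{size-$w$} subclauses of $C$, rather than of arbitrary subclauses, and why $C$ is taken to have width $2w$ rather than $w$: the extra factor of two leaves room both to guarantee an implied literal inside every size-$w$ subclause of $C$ and, simultaneously, for the mask to reveal one of the many implied literals with the desired probability.
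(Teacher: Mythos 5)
Your counting step (every width-$w$ subclause of $C$ contains an implied literal, hence at least $w+1$ of the $2w$ literals are implied) and your masking step are both correct, but the conclusion you extract is materially weaker than what the lemma asserts and what the paper's proof delivers. You exhibit an unmasked literal $\ell$ that is only \emph{conditionally} implied: $\Pr_D[\ell=1\mid\neg C']\geq 1-\gamma$ for some witness $C'\subseteq C$. Nothing in your argument controls $\Pr_D[\neg C']$, so your ``indicated'' literal may be false in the drawn example with constant probability; for instance, under the affine constraint $x_1\oplus x_2\oplus x_3=1$ we have $\neg x_1\wedge\neg x_2\sugg x_3$, yet $\Pr[x_3=1]=1/2$ unconditionally. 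This gap matters: the lemma is invoked in Theorem~\ref{mainthm} to conclude that such clauses are \emph{witnessed} to evaluate to true, via a union bound asserting that the indicated literal is false with probability at most (roughly) $\gamma$ per literal. With your version that bound fails --- when $\neg C'$ fails, the clause is indeed true via some literal of $C'$, but that literal may well be masked, so the clause need not be witnessed true, and the unmasked literal you point to is simply false. Your parenthetical ``(conditioned on the associated $\neg C'$)'' is precisely the weakening that breaks the intended use.

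The paper's proof closes this gap with an adaptive peeling argument that your static count does not replace. It maintains a width-$w$ window $C_i$ of literals of $C$; the hypothesis yields $C'_i\vee\ell_i\subseteq C_i$ with $\neg C'_i\sugg\ell_i$, so with probability at least $1-\gamma$ some literal of $C'_i\vee\ell_i$ is \emph{actually satisfied in the sampled $x$}; that satisfied literal $\ell'_i$ is removed from the window and replaced by a fresh literal of $C$ (this replacement is what the width $2w$ is really for), and after $w$ rounds one has $w$ distinct literals of $C$ each true in $x$ (up to per-round $\gamma$ failures, which Theorem~\ref{mainthm} absorbs by a union bound). Since these positions are determined by $x$ alone and the mask is independent, one of them is unmasked except with probability $(1-\mu)^w\leq\delta$, so the unmasked literal produced is true in the example, not merely conditionally biased. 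To repair your proof you would need some mechanism of this kind that converts a failure of the conditioning event $\neg C'$ into another literal of $C$ that is genuinely true and still in play for the masking argument.
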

\begin{proof}
Let $C_0$ be the first $w$ literals of $C$, and suppose $m$ is drawn from
$M_\mu$ and $x$ is drawn from $D$. Given $C_i$ (of width $w$), some $\ell_i$ for
the subclause $C'_i\vee\ell_i$ is $(1-\gamma)$-implied by $\neg C'_i$. Thus, 
with probability at least $1-\gamma$, either $\ell_i$ is satisfied in $x$ or 
some other literal of $C'_i$ is satisfied in $x$. We let $\ell'_i$ be the other 
satisfied literal of $C'_i$ in the latter case, and let it be $\ell_i$ in the 
former case. Now, we construct $C_{i+1}$ by removing $\ell'_i$ from $C_i$ and 
taking the next literal of $C$, and we repeat. Note that since $C$ has width at
least $2w$, we find at least $w$ such literals $\ell'_i$. With probability at
least $1-(1-\mu)^w\leq 1-\delta$, at least one of these literals is unmasked.
The first such unmasked literal is as needed.
\end{proof}


The final case to consider is when there is a subclause of width at most $\frac
{1}{\mu\beta}\ln\frac{1}{\delta}$ (noting $\beta\leq 1/2$) for which there is
a literal such that its negation is implied by the negation of the clause. In 
this case, as we elaborate on next, there is a small clause that we can learn 
that can be used to reduce the width of the first clause. If the clause is 
sufficiently wide and yet not witnessed to evaluate to true, it must be because 
there are many such literals. Using these learned clauses to eliminate these 
literals from the given clause enables us to find an equivalent clause that 
satisfies our width bound.

\begin{algorithm}[t!]
\DontPrintSemicolon
\SetKwFunction{WRES}{W-refute}
\SetKwInOut{Input}{input}\SetKwInOut{Output}{output}
\SetKwInOut{Function}{function}

\Function{\WRES($\varphi,w$) decides whether or not there is a width-$w$ 
refutation of CNF $\varphi$.}
\Input{CNF $\varphi$, bound $p(n)$, $\epsilon,\delta,\gamma,\beta\in (0,1)$, 
partial assignments $\rho^{(1)},\ldots,\rho^{(m_0+m_1)}$ from $M(D)$ for
$m_0=\frac{2w(2n+1)^{2w}}{\mu^{2w}\gamma^2}\ln\frac{4(2n+1)}{\delta}$ and
$m_1=\frac{1}{2\gamma^2}\ln\frac{2}{\delta}$ where
$w=\frac{1}{\mu\beta}\ln\frac{2m_1\cdot p(n)}{\delta}$.}
\Output{{\em Accept} or {\em Reject} (cf. Theorem~\ref{mainthm})}

\Begin{
Initialize $\psi$ to an empty CNF.\\
\ForEach{Clause $C$ of width at most $w$}
{
  $\mathit{FALSIFIED}\leftarrow 0$.\\
  \For{$i=1,\ldots,m_0$}
  {
    \If{$C$ is falsified on $\rho^{(i)}$}
    {
      Increment $\mathit{FALSIFIED}$.
    }
  }
  \If{$\mathit{FALSIFIED}\leq\frac{\gamma\mu^w}{2(2n+1)^w}m_0$}
  {
    $\psi\leftarrow\psi\wedge C$.
  }
}
Initialize $\varphi'$ to an empty CNF.\\
\ForEach{Clause $C$ from $\varphi$}
{
 \ForEach{Clause $C'$ from $\psi$}
 {
   \If{$C'=C''\vee\ell$ where $C''\vee\neg\ell$ is a subclause of $C$}
   {
     $C\leftarrow C$ with $\neg\ell$ deleted.
   }
 }
 $\varphi'\leftarrow \varphi'\wedge C$.
}
$\mathit{FAILED}\leftarrow 0$.\\
\For{$i=m_0+1,\ldots,m_0+m_1$}
{
  \If{$\WRES((\varphi'\wedge\psi)|_{\rho^{(i)}},2w)$ rejects}
  {
    Increment $\mathit{FAILED}$.\\
    \If{$\mathit{FAILED}> \lfloor\epsilon\cdot m_1\rfloor$}
    {
      \Return{{\em Reject}}
    }
  }
}
\Return{{\em Accept}}
}

\caption{Learn+RES}\label{mainalg}
\end{algorithm}

\paragraph{Analysis of Algorithm~\ref{mainalg}.}
We first note that the learned clauses are highly valid under $D$ and contain
clauses corresponding to $\ell_1\wedge\cdots\wedge\ell_{k-1}\Rightarrow\ell_k$
whenever $\ell_1\wedge\cdots\wedge\ell_{k-1}\sugg\ell_k$ for $k\leq w$:

\begin{lemma}
\label{learned-valid}
Suppose $D$ is a distribution with a width-$w$ $(\beta,1-\frac{\gamma}
{4(2n+1)^w})$ correlation gap. Let $\psi$ be the conjunction of all clauses of 
width at most $w$ that are witnessed to evaluate to false in at most a $\frac
{\gamma\mu^w}{2(2n+1)^w}$-fraction of a sample of $m_0$ partial assignments from
$M_\mu(D)$ (for $m_0$ as given in Algorithm~\ref{mainalg}). Then with 
probability at least $1-\delta/2$, $\psi$ is $1-\gamma$-valid and contains all 
$1-\frac{\gamma}{4(2n+1)^w}$-valid clauses of width at most $w$, including 
specifically $C=C'\vee\ell$ such that $\neg C'\sugg\ell$.
\end{lemma}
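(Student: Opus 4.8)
The statement is essentially a quantitative concentration argument, so the plan is to establish three things: (i) every clause of width at most $w$ that is highly valid under $D$ is only rarely witnessed false under $M_\mu(D)$, so it survives the threshold test; (ii) conversely, any clause that is not $1-\gamma$-valid is witnessed false often enough that it fails the test; and (iii) the specific clauses $C = C'\vee\ell$ with $\neg C'\sugg\ell$ fall into category (i). Throughout, the key bridge between ``valid under $D$'' and ``witnessed false under $M_\mu(D)$'' is that a clause $C$ of width $k$ is witnessed to evaluate to false on $\rho = m(x)$ exactly when $x$ falsifies $C$ \emph{and} all $k$ of the relevant coordinates survive the mask; since the mask is independent with parameter $\mu$, this event has probability $\mu^k \Pr_{x\in D}[C(x)=0]$.

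First I would make this bridge precise: for a width-$k$ clause $C$ with $k \le w$, $\Pr_{\rho\in M_\mu(D)}[C \text{ witnessed false on }\rho] = \mu^k\cdot\Pr_{x\in D}[\neg C]$, and I would note $\mu^k \ge \mu^w$ and $(2n+1)^k \le (2n+1)^w$ so that the threshold $\frac{\gamma\mu^w}{2(2n+1)^w}$ is a valid cutoff uniformly over all such clauses. Then the two directions: if $C$ is $1-\frac{\gamma}{4(2n+1)^w}$-valid, its witnessed-false probability is at most $\mu^k\cdot\frac{\gamma}{4(2n+1)^w} \le \frac{\gamma\mu^w}{4(2n+1)^w}$, comfortably below half the threshold, so a Hoeffding (additive Chernoff) bound with the chosen $m_0$ makes the empirical falsified-fraction stay below the threshold except with tiny probability; if $C$ is \emph{not} $1-\gamma$-valid, its witnessed-false probability is at least $\mu^k\gamma \ge \mu^w\gamma$, which exceeds the threshold $\frac{\gamma\mu^w}{2(2n+1)^w}$ by a factor more than $2(2n+1)^w$, so again Hoeffding with $m_0$ pushes the empirical count above the threshold except with tiny probability. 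Finally I would union-bound over all at most $(2n+1)^w$ clauses of width at most $w$ (counting a clause as a choice of at most $w$ literals over $2n$ polarities plus a slack term, giving the $(2n+1)^w$ bound), using that $m_0$ carries the extra $\ln\frac{4(2n+1)}{\delta}$ and the $\frac{(2n+1)^{2w}}{\mu^{2w}\gamma^2}$ factors precisely to absorb both the union bound and the fact that the ``gap'' we are detecting is itself of size $\sim\gamma\mu^w/(2n+1)^w$ rather than constant.

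For the validity claim: conditioned on the good event that every $1-\frac{\gamma}{4(2n+1)^w}$-valid width-$\le w$ clause is in $\psi$ and no clause that is less than $1-\gamma$-valid is in $\psi$, $\psi$ is a conjunction of clauses each of which is $1-\gamma$-valid individually, but we want $\psi$ \emph{as a whole} to be $1-\gamma$-valid. Here I would invoke the structure of the correlation gap: the clauses that get learned are (up to the threshold slack) exactly those of the form $C'\vee\ell$ with $\neg C'\sugg\ell$, i.e. encodings of the implications $\ell_1\wedge\cdots\wedge\ell_{k-1}\Rightarrow\ell_k$, each of which is $1-\frac{\gamma}{4(2n+1)^w}$-valid by the definition of the gap; since there are at most $(2n+1)^w$ of them, Proposition~\ref{classical-inf-bound} (the union bound for PAC-Semantics) gives that their conjunction is $1 - (2n+1)^w\cdot\frac{\gamma}{4(2n+1)^w} = 1-\gamma/4$-valid, hence certainly $1-\gamma$-valid. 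And the fact that every such $C=C'\vee\ell$ with $\neg C'\sugg\ell$ is itself $1-\frac{\gamma}{4(2n+1)^w}$-valid (so lies in the learned set on the good event) is immediate from unwinding the definition of $1-\gamma$-implied with $\gamma$ replaced by $\frac{\gamma}{4(2n+1)^w}$, which is the correlation-gap parameter in the hypothesis.

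The main obstacle I anticipate is bookkeeping the constants so that the single sample size $m_0 = \frac{2w(2n+1)^{2w}}{\mu^{2w}\gamma^2}\ln\frac{4(2n+1)}{\delta}$ simultaneously suffices for (a) the union bound over $\sim (2n+1)^w$ clauses, (b) resolving an additive gap as small as $\sim\frac{\gamma\mu^w}{(2n+1)^w}$ in the witnessed-false probability (which is what forces the $(2n+1)^{2w}/\mu^{2w}/\gamma^2$ factor — squaring the gap's reciprocal, as Hoeffding demands), and (c) leaving a $\delta/2$ budget for this lemma within the overall $\delta$ of the algorithm. A secondary subtlety is being careful that ``witnessed false'' and ``falsified by $x$ with all coordinates unmasked'' really coincide for these clauses and that the independence of the mask coordinates is genuinely available (it is, by the definition of $M_\mu$), so that the clean $\mu^k\Pr_{x\in D}[\neg C]$ expression is exactly right rather than merely an approximation.
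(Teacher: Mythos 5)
Your overall architecture matches the paper's: the identity $\Pr_{M_\mu(D)}[C\text{ witnessed false}]=\mu^k\Pr_D[\neg C]$ for a width-$k$ clause under independent masking, Hoeffding in both directions against the threshold $\frac{\gamma\mu^w}{2(2n+1)^w}$, and a union bound over the fewer than $(2n+1)^w$ clauses of width at most $w$. However, there is a genuine gap in how you get that $\psi$, \emph{as a conjunction}, is $1-\gamma$-valid. You reduce this to the claim that the learned clauses are ``(up to the threshold slack) exactly those of the form $C'\vee\ell$ with $\neg C'\sugg\ell$,'' and then sum the gap parameter over them. That characterization is false: $\psi$ contains \emph{every} width-$\le w$ clause that is rarely witnessed false --- perfectly valid clauses, weakenings, and clauses that are highly valid for reasons unrelated to the correlation gap --- so you cannot bound the invalidity of $\psi$ by counting implication clauses. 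Your exclusion direction only shows that each surviving clause is $1-\gamma$-valid, which is too weak for the conjunction. The paper's proof calibrates the exclusion direction more aggressively: any clause that is not $1-\frac{\gamma}{(2n+1)^w}$-valid has witnessed-false probability at least $\frac{\mu^w\gamma}{(2n+1)^w}$ (twice the threshold), so with probability $1-\delta/4$ every clause that survives into $\psi$ has invalidity at most $\frac{\gamma}{(2n+1)^w}$; since $\psi$ has fewer than $(2n+1)^w$ clauses, the union bound (Proposition~\ref{classical-inf-bound}) then gives $1-\gamma$-validity of $\psi$ with no need to identify which clauses were learned. This calibration is exactly what the size of $m_0$ is built to support, and it is the missing step in your write-up.

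A second, smaller problem is in your inclusion direction: you write $\mu^k\cdot\frac{\gamma}{4(2n+1)^w}\le\frac{\gamma\mu^w}{4(2n+1)^w}$, which reverses the inequality you had just correctly noted ($\mu^k\ge\mu^w$ for $k\le w$). The only bound that follows is $\frac{\gamma}{4(2n+1)^w}$, which is \emph{not} ``comfortably below half the threshold'' once $\mu^w<1/2$: for a narrow clause whose invalidity is close to $\frac{\gamma}{4(2n+1)^w}$, the true witnessed-false probability can exceed the acceptance threshold, so the margin you invoke for Hoeffding does not exist as stated. (The paper's own proof is terse at this exact point --- it asserts the Hoeffding bound for the inclusion direction without exhibiting a margin --- so this is a shared subtlety rather than a divergence in method, but the inequality as you wrote it is false and the step does not go through as written.)
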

\begin{proof}
Since each clause in $\psi$ has width at most $w$, every literal in each such
clause is simultaneously not set to $*$ by $M_\mu$ with probability at least
$\mu^w$; if a clause is not $1-\frac{\gamma}{(2n+1)^w}$ valid, then it is 
witnessed to evaluate to false on $M_\mu(D)$ with probability at least 
$\frac{\mu^w\gamma}{(2n+1)^w}$. Therefore, Hoeffding's inequality implies that
it is only not witnessed to evaluate to false sufficiently often to eliminate it
from $\psi$ with probability at most $\frac{\delta}{4(2n+1)^w}$. Since there are
fewer than $(2n+1)^w$ clauses of width at most $w$, by a union bound over these 
clauses, we find that with probability at least $1-\delta/4$, these clauses are 
all $1-\frac{\gamma}{(2n+1)^w}$-valid. Again, by a union bound over the clauses,
this means that their conjunction, $\psi$, is $1-\gamma$-valid.

Now, when a clause $C$ of width at most $w$ is $1-\frac{\gamma}{4(2n+1)^w
}$-valid (including $C=C'\vee\ell$ such that $\neg C'\sugg\ell$), Hoeffding's 
inequality similarly guarantees that $C$ will be witnessed to evaluate to false 
on $M_\mu(D)$ in a $\frac{\gamma\mu^w}{2(2n+1)^w}$-fraction of $m_0$ partial 
assignments with probability at most $\frac{\delta}{4(2n+1)^w}$. Therefore, all 
of these clauses appear in $\psi$ except with probability $\delta/4$, and a 
final union bound gives that both conditions hold with probability $1-\delta/2$.
\end{proof}

The analysis of the algorithm is now a generalization of the analysis of the 
width-based algorithm discussed in Section~\ref{unif}. The main twist is that
we may need to modify the resolution derivation by introducing learned clauses
in order to obtain a low-width derivation.

\begin{theorem}\label{mainthm}
Given an input CNF $\varphi$, a bound $p(n)$, $\mu,\epsilon,\gamma,\delta,\beta
\in (0,1)$, and access to examples from $M_\mu(D)$ for a distribution $D$ that 
has a width-$w=\frac{1}{\mu\beta}\ln\frac{2m_1\cdot p(n)}{\delta}$
$\left(\beta,1-\frac{\gamma}{4n(2n+1)^w}\right)$ correlation gap,
(where $m_1=\frac{1}{2\gamma^2}\ln\frac{2}{\delta}$) and 
given that either
\begin{compactitem}
\item $\varphi$ is satisfied by $D$ with probability at least $\epsilon+
2\gamma$ or
\item there exist CNFs $\psi_0$ and $\psi_1$ such that 
\begin{compactitem}
\item $\psi_0$ consists of $1-\frac{\gamma}{4n(2n+1)^w}$-valid clauses and
\item $\psi_1$ is witnessed to evaluate to true with probability 
$1-\epsilon+2\gamma$ under $M_\mu(D)$ 
\end{compactitem}
and there is a resolution refutation of size $p(n)$ of 
$\psi_0\wedge\psi_1\wedge\varphi$
\end{compactitem}
Algorithm~\ref{mainalg} decides which case holds with probability 
$1-\delta$, and runs in time 
$n^{O(\frac{1}{\mu\beta}\log\frac{p(n)}{\gamma\delta})}$.
\end{theorem}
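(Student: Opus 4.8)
The plan is to establish the two cases by adapting the argument of Theorem~\ref{unifthm}, with the new ingredient being that we must sometimes \emph{rewrite} clauses of the given refutation using learned clauses from $\psi$ before we can claim a low-width refutation exists. First, invoke Lemma~\ref{learned-valid} (with the correlation-gap parameter sharpened by the extra factor of $n$ in the theorem statement, which is what accommodates the $n$ clauses of $\varphi$ we will modify): with probability $1-\delta/2$, the formula $\psi$ built in the first phase is $1-\gamma$-valid, and it contains every $1-\frac{\gamma}{4n(2n+1)^w}$-valid clause of width $\le w$ — in particular every clause $C''\vee\ell$ with $\neg C''\sugg\ell$, and every clause of $\psi_0$ of width $\le w$. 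I would also record, via Hoeffding over the last $m_1$ samples, that with probability $1-\delta/2$ the fraction of the $\rho^{(i)}$ on which $\psi_1$ is witnessed false is within $\gamma$ of its true value, so it is at most $\epsilon-\gamma$; union-bounding, all the ``good events'' hold with probability $1-\delta$.

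For the first case ($\varphi$ is $(\epsilon+2\gamma)$-valid), the argument mirrors Theorem~\ref{unifthm}: with high probability (Hoeffding over $m_1$) at least an $(\epsilon+\gamma)$-fraction of the underlying complete examples $x^{(i)}$ satisfy $\varphi$; each such $x^{(i)}$ also satisfies $\psi$ (which is $1-\gamma$-valid), so with the remaining slack a $(1-\epsilon)$-fraction of the $\rho^{(i)}$ have $(\varphi'\wedge\psi)|_{\rho^{(i)}}$ satisfiable — here one must check that the literal-deletion step producing $\varphi'$ from $\varphi$ only ever \emph{weakens} clauses in a way consistent with $\psi$, i.e., deleting $\neg\ell$ from $C''\vee\neg\ell$ when $C''\vee\ell\in\psi$ preserves satisfiability of $\varphi'\wedge\psi$ on any assignment satisfying $\varphi\wedge\psi$ (if the assignment falsifies $C''$ it must satisfy $\ell$, hence falsifies $\neg\ell$, so that literal was dead anyway). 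By soundness of resolution the recursive \texttt{W-refute} call rejects on these $\rho^{(i)}$, so \texttt{FAILED} exceeds $\lfloor\epsilon m_1\rfloor$ and the algorithm correctly Rejects.

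For the second case, let $\Pi$ be the size-$p(n)$ refutation of $\psi_0\wedge\psi_1\wedge\varphi$; eliminate its weakening steps as noted after the definition of resolution. The core structural claim is that $\Pi|_{\rho^{(i)}}$, after the same learned-clause rewriting the algorithm applies, collapses to a width-$2w$ refutation of $(\varphi'\wedge\psi)|_{\rho^{(i)}}$ for a $(1-\epsilon)$-fraction of the $\rho^{(i)}$. Clauses of $\psi_1$ vanish (they are witnessed true on that fraction); axioms of $\psi_0$ of width $\le w$ are in $\psi$; the genuinely wide clauses are handled by Lemmas~\ref{wide-witnessed} and~\ref{consconsts}, which I would combine into a single dichotomy: in any clause of width $\ge 2w$ appearing in $\Pi|_\rho$, either every $w$-subclause contains a literal $1$-implied ($\sugg$) by the negation of the rest — in which case the clause has an unmasked $1-\gamma$-true literal with high probability — or some $w$-subclause has every literal $\beta$-balanced, in which case Lemma~\ref{wide-witnessed} witnesses it true. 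The \textbf{main obstacle} is the bookkeeping in the intermediate regime, namely a clause of width between $w$ and $2w$ some of whose literals are $1$-implied: one must argue that resolving against the corresponding learned clauses in $\psi$ (exactly the deletions \texttt{W-refute} is licensed to perform on $\varphi'$, and analogously inside the refutation) drives every such clause down to width $\le w$ without blowing up the number of clauses or the width of any resolvent beyond $2w$, so that a width-$2w$ refutation of the restricted, rewritten formula genuinely exists; chaining the probabilities through a union bound over the $\le p(n)$ clauses of $\Pi$ and the $m_1$ samples (this is where $w=\frac{1}{\mu\beta}\ln\frac{2m_1 p(n)}{\delta}$ is calibrated) then gives that \texttt{FAILED}$\le\lfloor\epsilon m_1\rfloor$ and the algorithm Accepts. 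The running time is immediate: the first phase examines $(2n+1)^w=n^{O(\frac{1}{\mu\beta}\log\frac{p(n)}{\gamma\delta})}$ clauses against $m_0$ samples, and each \texttt{W-refute}$(\cdot,2w)$ call runs in $n^{O(w)}$ time.
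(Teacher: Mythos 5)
Your overall architecture does match the paper's proof (learn $\psi$ via Lemma~\ref{learned-valid}, dispose of the first case by soundness using $\varphi\wedge\psi\models\varphi'$, and in the second case restrict $\Pi$ and argue a width-$2w$ refutation of $(\varphi'\wedge\psi)|_{\rho^{(i)}}$ exists), but there are two genuine gaps. First, you only account for clauses of $\psi_0$ of width at most $w$. A clause of $\psi_0$ of width greater than $w$ is available to the algorithm in no form whatsoever: it is not an axiom of $\varphi'$, it is too wide to be in $\psi$, and your rewrite-by-resolving-with-$\psi$ mechanism cannot be applied to it because the algorithm has no occurrence of it from which to start a derivation (unlike clauses of $\varphi$, which are explicitly preprocessed, or clauses derived inside $\Pi|_\rho$, which the induction supplies). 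The paper closes this by a different move: such a clause, being $\bigl(1-\frac{\gamma}{4n(2n+1)^w}\bigr)$-valid, has a width-$\le w$ subclause obtained by deleting fewer than $n$ literals, each eliminable because some $C'\vee\neg\ell$ with $\neg C'\sugg\neg\ell$ is $\bigl(1-\frac{\gamma}{4n(2n+1)^w}\bigr)$-valid; by Proposition~\ref{classical-inf-bound} the subclause is still $\bigl(1-\frac{\gamma}{4(2n+1)^w}\bigr)$-valid and is therefore itself learned into $\psi$ by Lemma~\ref{learned-valid}. That union bound over up to $n$ deleted literals is what the extra factor of $n$ in the correlation-gap parameter pays for --- not, as you assert, the ``$n$ clauses of $\varphi$ we will modify'' (clauses of $\varphi$ need no validity at all, since forming $\varphi'$ is a sound resolution step regardless).

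Second, the step you label the ``main obstacle'' is the theorem's core, and flagging it is not proving it; moreover your dichotomy is off. A literal $\ell$ with $\neg C'\sugg\ell$ makes the clause nearly true (this is the Lemma~\ref{consconsts} side, belonging with the witnessed clauses), whereas elimination via $\psi$ is licensed precisely in the opposite situation, $\neg C'\sugg\neg\ell$, when the learned clause is $C'\vee\neg\ell$; and your two horns (every $w$-subclause has a positively implied literal, or some $w$-subclause is all-balanced) are not exhaustive --- the leftover case is exactly the set $\Pi_0$ where the work happens. The paper's resolution: partition $\Pi$ into $\Pi_1$ (a width-$w/2$ all-balanced subclause, handled by Lemma~\ref{wide-witnessed}, or a width-$w$ subclause every width-$w/2$ subclause of which has a positively implied literal, handled by Lemma~\ref{consconsts}) and $\Pi_0$; by the gap, every width-$>w$ subclause of a $\Pi_0$ clause has a width-$w/2$ subclause $C'\vee\ell$ with $\neg C'\sugg\neg\ell$, hence $C'\vee\neg\ell\in\psi$. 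Then induct on the steps of $\Pi|_\rho$, whose surviving clauses all lie in $\Pi_0$ on the good examples: cutting two derived width-$\le w$ subclauses gives a width-$\le 2w$ subclause of the next step, which, being a subclause of a $\Pi_0$ clause, is driven back below width $w$ by repeated resolutions with clauses of $\psi$, each strictly shrinking it; since the only subclause of the empty clause is itself, this exhibits the width-$2w$ refutation that \texttt{W-refute} finds. (Minor: in the first case, $(\epsilon+2\gamma)$-validity of $\varphi$ yields only that more than an $\epsilon$-fraction --- not a $(1-\epsilon)$-fraction --- of the restrictions are satisfiable, which still forces more than $\lfloor\epsilon m_1\rfloor$ failures, so your conclusion there stands.)
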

\begin{proof}
We note that the running time bound is essentially immediate from the 
description of the algorithm, as the width-based algorithm, for width $w'$, runs
in time $O(n^{2w'})$. We therefore turn to considering the correctness of the 
algorithm. The first case is similarly simple: by Proposition~\ref
{classical-inf-bound}, if $\varphi$ is $\epsilon+2\gamma$-valid under $D$, then 
since the learned CNF $\psi$ is $1-\gamma$-valid with probability at least $1-
\delta/2$ by Lemma~\ref{learned-valid}, then $\varphi\wedge\psi$ is $\epsilon+
\gamma$-valid---meaning that with probability at least $\epsilon+\gamma$, a 
masked example $\rho$ is drawn from $M_\mu(D)$ for which $(\varphi\wedge
\psi)|_\rho$ is satisfiable. It then follows by Hoeffding's inequality that with
probability greater than $1-\delta/2$, for at least an $\epsilon$-fraction of 
the actual $m_1$ examples, $(\varphi\wedge\psi)|_\rho$ will be satisfiable, and
therefore by the soundness of resolution, at least an $\epsilon$-fraction of the
iterations must fail to find a refutation (of the consequence $(\varphi'\wedge
\psi)|_\rho$), leading the algorithm to reject as needed with probability at 
least $1-\delta$.

It only remains to establish that the algorithm accepts with probability at
least $1-\delta$ when there exists a size-$p(n)$ refutation from some almost
perfectly valid $\psi_0$ and some $\psi_1$ that is witnessed to evaluate to true
with probability $1-\epsilon+2\gamma$ under $M_\mu(D)$. We assume (WLOG) that 
every clause of $\psi_1$ appears in the refutation. We now begin by writing the 
size-$p(n)$ proof $\Pi$ as two subsets of clauses, $\Pi_1$ and $\Pi_0$ where 
$\Pi_1$  consists of clauses which either
\begin{compactenum}
\item contain a subclause $C$ of width $w/2$ (for, recall, $w=\frac{1}{\mu\beta}
\ln\frac{2m_1p(n)}{\delta}$) for which every further subclause $\ell\vee C'$ of 
$C$ has $\ell$ balanced for $\neg C'$ or 
\item contain a subclause of width $w$ for which every (sub-)subclause $C$ of 
width $w/2$ has a further subclause $C'\vee\ell$ for $C'$ and $\ell$ satisfying 
$\neg C'\sugg\ell$,
\end{compactenum}
and $\Pi_0$ consists of the rest of the clauses.

We first note that the clauses in $\Pi_1$ (and in $\psi_1$) are simultaneously
witnessed in most of the partial examples: we first note that by Hoeffding's 
inequality, with probability at least $1-\delta/2p(n)$, each clause in $\psi_1$ 
is witnessed to evaluate to true in at least $(1-\epsilon+\gamma)m_1$ partial 
examples; likewise, for the clauses of $\Pi_1$ which have subclauses for which 
every width $w/2$ subclause has an implied literal, by a union bound over all 
$2n$ literals, the probability that any of them is both a $(1-\frac{\gamma}{4n
(2n+1)^w})$-implied literal indicated by Lemma~\ref{consconsts} and false in any
example is at most $\gamma/4$. (Note that with probability at least $1-\delta/
(2p(n)m_1)$, every such clause has some such unmasked literal in each example.)
Therefore, by Hoeffding's inequality, these clauses are all witnessed
to evaluate to true in at least $(1-\gamma)m_1$ of the examples with probability
at least $1-\delta/(2p(n)m_1)$. Similarly, by Lemma~\ref{wide-witnessed}, the
clauses with $\beta$-balanced subclauses of width at least $w$ are witnessed to
evaluate to true in each partial example with probability at least $1-\delta/
(2p(n)m_1)$. Hence by a union bound over the clauses and examples, every clause 
in $\Pi_1$ (and $\psi_1$) is ultimately (simultaneously) witnessed to evaluate 
to true in at least $(1-\epsilon)m_1$ of the examples considered in the main 
loop of the algorithm with probability $1-\delta/2$.

We now note that for every clause in $\Pi_0$, every subclause of width greater 
than $w$ must have a subclause $C$ of width $w/2$ with some literal $\ell$ such 
that for a further subclause $C'\vee\ell$, $\neg C'\sugg\neg\ell$---every such
$C$ must have a subclause with a literal $\ell$ that is not balanced (or else 
the clause would be in $\Pi_1$ by the first condition) and the further subclause
cannot have $\neg C'\sugg\ell$ for every width $w/2$ subclause $C$ (or else it 
would be in $\Pi_1$ by the second condition). Therefore, by Lemma~\ref
{learned-valid}, the clause $C'\vee\neg\ell$ is added to $\psi$ by the 
algorithm. Moreover, since the clauses of $\psi_0$ are $1-\frac{\gamma}{4n(2n+
1)^w}$-valid, and would reach width $w$ after eliminating fewer than $n$ such 
literals, for the clauses of $\psi_0$ in $\Pi_0$ there is a $\leq w$-CNF 
$\psi'_0$ consisting of subclauses of every clause of $\psi_0$ appearing in 
$\Pi_0$ that (by Proposition~\ref{classical-inf-bound}) are $1-\frac{\gamma}
{4(2n+1)^w}$-valid, and hence also added to $\psi$ by Lemma~\ref{learned-valid}.

Similarly, for every input clause $C$ (from $\varphi$) in $\Pi_0$, we can always
derive a subclause of width at most $w$ using these clauses---as long as our 
subclause of $C$ has width greater than $w$, some subclause of width $w/2$ must 
contain a subclause $C'\vee\ell$ for which a clause $C'\vee\neg\ell$ is in 
$\psi$, which we can use to reduce its size further, so the clauses of 
$\varphi'$ corresponding to clauses of $\varphi$ in $\Pi_0$ have width at most
$w$.

To conclude, we observe that since all of the clauses of $\Pi_1$ (and $\psi_1$)
are witnessed to evaluate to true in at least $(1-\epsilon)m_1$ of the partial 
examples considered by the algorithm in the main loop with probability $1-
\delta/2$, for each such partial example $\rho$, by Proposition~\ref{res-closed}
$\Pi|_\rho$ is a resolution refutation of (restrictions of) the input and 
clauses of $\psi_0$, consisting only of (restrictions of) clauses from $\Pi_0$.
Now, we know that there is a $\leq w$-CNF $\psi'_0$ consisting of subclauses of 
every clause of $\psi_0$ that is included in $\psi$ with probability at least 
$1-\delta/2$, and for each input clause in $\Pi_0$ there is a subclause $C$ of 
width less than $w$ that can be derived by the algorithm. Therefore, by 
induction on the steps of $\Pi|_\rho$, we know that by carrying out the steps of
the proof with the derived subclauses, we can always derive some subclause 
$C'_i$ of each $i$th clause in $\Pi|_\rho$, which is in $\Pi_0$, and hence has 
some further subclause of width at most $w$ which can be derived in width $2w$:
If the next step of $\Pi|_\rho$ is obtained by applying the cut rule to 
$C_i$ and $C_j$, we can apply the cut rule to the derived width-$w$ subclauses 
$C'_i$ and $C'_j$ to obtain a subclause of the next step of width at most $2w$. 
Since this next step is in $\Pi_0$, this clause can be further reduced to width 
$w$ as noted above, completing the induction step.  Since, finally, the only 
subclause of the empty clause (derived in the final step) is the empty clause 
itself, this ultimately yields a width-$2w$ refutation of $\Pi|_\rho$. Since the
algorithm therefore derives the empty clause in at least $(1-\epsilon)m_1$ out 
of the $m_1$ examples with probability at least $1-\delta$, it therefore accepts
with probability at least $1-\delta$ in this case, as needed.
\end{proof}

\paragraph{Remarks on the algorithm.} 
One could divide the derivation steps of our algorithm into ``steps of 
$\Pi|_\rho$'' and ``intermediate derivations,'' where the intermediate 
derivations only involve the width $w$ CNF $\psi$, and serve to find a width-$w$
subclause of the next step of $\Pi|_\rho$. Based on this observation, we could 
have obtained a slightly more complicated algorithm that only uses dynamic 
programming over width-$w$ clauses, that checks after each derivation to see if 
the clause can be reduced to width $w$ by applications of the cut rule to 
clauses from $\psi$ that strictly reduce the size of the intermediate clause
(and discarding the clause if such a reduction is not possible). One might 
prefer it because its complexity is $O(n^{3w})$ rather than $O(n^{4w})$, as 
acheived by our algorithm.

In contrast to the simple algorithm of Section~\ref{unif}
(and more generally, the generic algorithms from prior work~\cite{juba13})
Theorem~\ref{mainthm} also permits the proof to utilize an arbitrary
{\em highly valid} CNF, in addition to one that is {\em witnessed} with
probability $1-\epsilon$. This is accomplished partially due to the simplicity 
of the distribution -- we know that the clauses of any such CNF are either 
witnessed or are essentially equivalent to a short clause -- and partially due 
to the simplicity of the masking process, which allows us to identify all such 
short clauses.

Clearly, our algorithm avoids the sources of intractability indicated by the
various hardness results for the tasks in the two stages. It is easiest to see 
how the problem of Tseitin's hard tautologies~\cite{tseitin70} are avoided in 
the problem we consider: we only certify the validity of a DNF when there exists
a small resolution proof for us to discover, and the proof complexity results 
merely establish that no such small resolution proof exists. So, such an example
is ``out of bounds'' for our algorithm. The way we circumvent difficulties in 
learning is perhaps more interesting. Recall that Valiant's original work in 
this area~\cite{valiant00} concerned algorithms that learned a collection of 
formulas and subsequently reasoned about what these learned formulas entailed. 
Following our previous work~\cite{juba13}, our objective in learning is 
ultimately to determine whether or not {\em there exist} formulas that would 
suffice to complete a proof of a query. Here, we avoid a first common source of 
intractability by not aiming to find CNF representations of our 
hypotheses---essentially as done by ``improper'' learning algorithms. Arguably, 
this is the first step in circumventing the hardness of learning parities 
identified by Michael~\cite{michael10} (but note also that it is still not clear
how to test whether a parity constraint of moderate size is satisfied in our 
affine distributions). Ultimately, the hardness of learning parities in the wRFA
model~\cite{bdd98} is circumvented by our structural result about the resolution
proofs under our distributions: since we can show it suffices to only consider 
logarithmic-width clauses, we effectively only need to learn logarithmic-size 
parities, which is feasible in our partial-information model.

\section{Deciding near-perfect validity of CNF queries}
\label{perfect-cnf}
We now briefly note that our techniques also allow us to strengthen the results
of Khardon and Roth~\cite{kr99} on learning to reason for the special case of
distributions with a width-$O(\log n)$ $(\beta,1-1/q(n))$ correlation gap (such 
as affine distributions) for a quasipolynomial function $q(n)$ and independent 
masking: we show how to decide whether a (general) CNF query is almost perfectly
valid or significantly invalid. In particular, this avoids the requirement that 
the queries have small proofs, but the reader should notice that it applies only
to establishing the validity of CNF (not DNF) queries, only for ``nearly 
perfect'' implications, and only for distinguishing ``nearly perfect'' queries 
from defective ones.

\begin{algorithm}[t!]
\DontPrintSemicolon
\SetKwInOut{Input}{input}\SetKwInOut{Output}{output}

\Input{CNF $\varphi$, $\epsilon,\delta,\gamma,\beta\in (0,1)$, list of partial
assignments $\rho^{(1)},\ldots,\rho^{(m_0+m_1)}$ from $M(D)$ for
$m_0=\frac{32w(2n+1)^{2w}}{\mu^{4w}\gamma^2}\ln\frac{4n+2}{\delta}$ and
$m_1=\frac{32}{(\mu^w\gamma)^2}\ln\frac{2}{\delta}$ where
$w=\frac{2}{\beta}\ln\frac{4|\varphi|}{\gamma}$ and $|\varphi|$ denotes the
number of clauses in $\varphi$.}
\Output{{\em Accept} if $\varphi$ is at least $(1-\frac{\gamma\mu^w}{8})$-valid 
under $D$ or {\em Reject} $\varphi$ is at most $(1-\gamma)$-valid}

\Begin{
Initialize $\psi$ to an empty CNF.\\
\ForEach{Clause $C$ of width at most $w$}
{
  $\mathit{FALSIFIED}\leftarrow 0$.\\
  \For{$i=1,\ldots,m_0$}
  {
    \If{$C$ is falsified on $\rho^{(i)}$}
    {
      Increment $\mathit{FALSIFIED}$.
    }
  }
  \If{$\mathit{FALSIFIED}\leq\frac{\gamma\mu^{2w}}{4(2n+1)^w}m_0$}
  {
    $\psi\leftarrow\psi\wedge C$.
  }
}
Initialize $\varphi'$ to an empty CNF.\\
\ForEach{Clause $C$ from $\varphi$}
{
 \ForEach{Clause $C'$ from $\psi$}
 {
   \If{$C'=C''\vee\ell$ where $C''\vee\neg\ell$ is a subclause of $C$}
   {
     $C\leftarrow C$ with $\neg\ell$ deleted.
   }
 }
 $\varphi'\leftarrow \varphi'\wedge C$.
}
$\mathit{FALSIFIED}\leftarrow 0$.\\
\For{$i=m_0+1,\ldots,m_0+m_1$}
{
  \If{Any clause of $\varphi'$ is falsified on $\rho^{(i)}$}
  {
    Increment $\mathit{FALSIFIED}$.\\
    \If{$\mathit{FALSIFIED}> (5\mu^w\gamma/8) m_1$}
    {
      \Return{{\em Reject}}
    }
  }
}
\Return{{\em Accept}}
}

\caption{CNF-Eval}\label{cnfalg}
\end{algorithm}

\begin{theorem}
Given an input CNF $\varphi$, $\mu,\gamma,\beta\in [0,1]$, and access to 
examples from $M_\mu(D)$ for a distribution $D$ with a width-$w=\frac{2}{\beta}
\ln\frac{4|\varphi|}{\gamma}$ $(\beta,1-\frac{\gamma\mu^w}{8n(2n+1)^w})$ 
correlation gap, and given that either $\varphi$ is at least $(1-\frac
{\gamma\mu^w}{8})$-valid under $D$ or $\varphi$ is at most $(1-\gamma)$-valid, 
Algorithm~\ref{cnfalg} decides which case holds with probability
$1-\delta$ in quasipolynomial time.
\end{theorem}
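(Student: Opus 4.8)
The plan is to mirror the proof of Theorem~\ref{mainthm}, specialized to the fact that Algorithm~\ref{cnfalg} evaluates $\varphi$ directly (via witnessed falsification on fresh examples) rather than searching for a refutation. Two preliminary facts about the simplification step come for free: since each clause of $\varphi'$ is obtained from a clause of $\varphi$ by deleting literals, $\varphi'\models\varphi$; and since each deletion of $\neg\ell$ is a resolution step against a learned clause $C''\vee\ell\in\psi$ with $C''$ still present, an induction on the deletions gives $\varphi\wedge\psi\models\varphi'$. I would then establish the analogue of Lemma~\ref{learned-valid} for the thresholds of Algorithm~\ref{cnfalg}: using that a width-$k$ clause $C$ is witnessed to evaluate to false on $M_\mu(D)$ with probability exactly $\mu^{k}(1-\mathrm{valid}(C))$, a Hoeffding-plus-union-bound over the fewer than $(2n+1)^w$ clauses of width $\le w$ shows that, with probability $\ge 1-\delta/2$ over the first $m_0$ partial examples, (i) every clause of $\psi$ is $(1-\tfrac{3\gamma\mu^w}{8(2n+1)^w})$-valid---hence $\psi$ itself is $(1-\tfrac{3\gamma\mu^w}{8})$-valid---and (ii) $\psi$ contains every width-$w$ clause that is $(1-\gamma_{\mathrm{gap}})$-valid, where $\gamma_{\mathrm{gap}}:=\tfrac{\gamma\mu^w}{8n(2n+1)^w}$. (For (ii) the attenuation factor $\mu^w$ is exactly what drives such a clause's witnessed-false probability below the inclusion threshold; narrow highly-valid clauses need not be learned, which is the point of the next step.)

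The heart of the proof is the structural claim that, on the above event, every clause of $\varphi'$ is either of width $\le w$ or is $(1-\eta)$-valid for $\eta=\max\{\gamma_{\mathrm{gap}},(\tfrac{\gamma}{4|\varphi|})^2\}\le\tfrac{\gamma}{4|\varphi|}$. Let $C$ be a clause of $\varphi$ whose simplification $C'$ has width $>w$. Because a clause only shrinks during the simplification loop, a single pass through $\psi$ already exhausts all available reductions, so the final $C'$ has no width-$\le w$ subclause $C''\vee\neg\ell$ with $C''\vee\ell\in\psi$. Fix any literal $\lambda$ of $C'$ and any subclause $C''$ of $C'\setminus\{\lambda\}$ of width $\le w-1$. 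Extending $C''$ to a width-$(w-1)$ subclause $\widetilde{C''}$ of $C'\setminus\{\lambda\}$ (possible since $|C'|>w$) and using that a superclause of a $(1-\gamma_{\mathrm{gap}})$-valid clause is $(1-\gamma_{\mathrm{gap}})$-valid: if $C''\vee\neg\lambda$ were $(1-\gamma_{\mathrm{gap}})$-valid then $\widetilde{C''}\vee\neg\lambda$ would be a width-$w$ clause in $\psi$ by (ii) and the algorithm would have deleted $\lambda$---contradiction. Hence $C''\vee\neg\lambda$ is not $(1-\gamma_{\mathrm{gap}})$-valid, i.e.\ $\neg C''\not\sugg\neg\lambda$, so by the width-$w$ correlation gap either $\lambda$ is $\beta$-balanced for $\neg C''$ or $\lambda$ is $(1-\gamma_{\mathrm{gap}})$-implied by $\neg C''$; in the latter case $C''\vee\lambda\subseteq C'$ is $(1-\gamma_{\mathrm{gap}})$-valid, hence so is $C'$. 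Thus either $C'$ is $(1-\gamma_{\mathrm{gap}})$-valid, or every literal of $C'$ is $\beta$-balanced for every width-$\le(w-1)$ subclause of the rest, in which case telescoping over the first $w$ literals $\lambda_1,\dots,\lambda_w$ of $C'$ gives $\Pr[C'=0]\le\prod_{i=1}^{w}\Pr[\lambda_i=0\mid\lambda_1,\dots,\lambda_{i-1}=0]\le(1-\beta)^w\le e^{-\beta w}=(\tfrac{\gamma}{4|\varphi|})^2$, using $w=\tfrac2\beta\ln\tfrac{4|\varphi|}{\gamma}$ and that each conditioning is on $\le w-1$ literals.

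The two directions then follow by standard concentration over the last $m_1$ examples, together with the observation that a clause witnessed false on $\rho$ must be false on the underlying assignment. If $\varphi$ is $(1-\tfrac{\gamma\mu^w}{8})$-valid, then $\Pr[\varphi'=0]\le\Pr[\varphi=0]+\Pr[\psi=0]\le\tfrac{\gamma\mu^w}{8}+\tfrac{3\gamma\mu^w}{8}=\tfrac{\gamma\mu^w}{2}$, so $\Pr_\rho[\varphi'\text{ witnessed false}]\le\tfrac{\gamma\mu^w}{2}$, and Hoeffding (with $m_1$ calibrated so the deviation is $\le\tfrac{\gamma\mu^w}{8}$) gives at most $\tfrac{5\gamma\mu^w}{8}m_1$ falsifications with probability $\ge1-\delta/2$, so the algorithm accepts. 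If $\varphi$ is at most $(1-\gamma)$-valid, then $\Pr[\varphi'=0]\ge\Pr[\varphi=0]\ge\gamma$ while, by the structural claim, the width-$>w$ clauses of $\varphi'$ are jointly false with probability at most $|\varphi|\cdot\tfrac{\gamma}{4|\varphi|}=\tfrac{\gamma}{4}$, so some width-$\le w$ clause of $\varphi'$ is false on $x$ with probability $\ge\tfrac{3\gamma}{4}$; conditioned on $x$, that clause's $\le w$ variables are all unmasked with probability $\ge\mu^w$, witnessing it false, so $\Pr_\rho[\varphi'\text{ witnessed false}]\ge\tfrac{3\gamma\mu^w}{4}$, and Hoeffding gives more than $\tfrac{5\gamma\mu^w}{8}m_1$ falsifications with probability $\ge1-\delta/2$, so the algorithm rejects. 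Intersecting with the probability-$(1-\delta/2)$ event of the first step gives correctness with probability $\ge1-\delta$, and the running time is dominated by scanning the $<(2n+1)^w$ candidate clauses against $m_0$ examples and the final loop over $m_1$ examples, which is quasipolynomial since $w=O(\tfrac1\beta\log\tfrac{|\varphi|}{\gamma})$ and $m_0,m_1=\mathrm{poly}(n,(2n+1)^w,\mu^{-w},\gamma^{-1},\log\delta^{-1})$ for $\mu,\beta$ constant.

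The main obstacle is the quantitative bookkeeping in the first two steps: one must choose $\gamma_{\mathrm{gap}}=\tfrac{\gamma\mu^w}{8n(2n+1)^w}$, the inclusion threshold $\tfrac{\gamma\mu^{2w}}{4(2n+1)^w}$, and the sample sizes $m_0,m_1$ so that the width-dependently-attenuated witnessed-false frequencies $\mu^{w(C)}(1-\mathrm{valid}(C))$ separate the learnable from the non-learnable clauses with room to spare for the union bound over $(2n+1)^w$ clauses, and so that the two slack terms---the $\tfrac{3\gamma\mu^w}{8}$ imperfection of $\psi$ and the $\tfrac{\gamma}{4|\varphi|}$ imperfection of each surviving wide clause---stay comfortably below the $\tfrac58\gamma\mu^w$-versus-$\tfrac34\gamma\mu^w$ margin that the final Hoeffding step needs. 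The genuinely new conceptual content relative to Theorem~\ref{mainthm}---that surviving wide clauses of $\varphi'$ must be highly valid, and that it suffices to learn only width-$w$ clauses because validity is monotone under adding literals---is what makes the direct CNF evaluation possible; everything else reuses the machinery of Lemmas~\ref{wide-witnessed}, \ref{consconsts} and \ref{learned-valid}.
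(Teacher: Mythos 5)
Your proposal is correct, and it follows the paper's overall skeleton (learn all width-$\le w$ clauses that are rarely witnessed false, simplify $\varphi$ to $\varphi'$ by resolving against $\psi$, then handle the two cases by Hoeffding over the last $m_1$ examples with exactly the $\tfrac{\gamma\mu^w}{2}$-versus-$\tfrac{3\gamma\mu^w}{4}$ separation around the $\tfrac{5\gamma\mu^w}{8}$ threshold), but the structural step is organized differently. The paper partitions the clauses of the \emph{original} $\varphi$ into $\varphi_1$ (clauses with a width-$w/2$ balanced subclause, or the implied-literal structure of Lemma~\ref{consconsts}, hence true except with probability $\gamma/4|\varphi|$ by Lemma~\ref{wide-witnessed} with $\mu=1$) and $\varphi_0$, and then argues that the $\varphi_0$-clauses shrink to width $\le w$ in $\varphi'$; you instead argue the contrapositive directly on the \emph{simplified} formula: any clause of $\varphi'$ that survives with width $>w$ admits no further reduction (your explicit single-pass exhaustion argument, which the paper leaves implicit), so by the correlation gap each literal is either balanced for every small subclause of the rest (giving validity $1-(\gamma/4|\varphi|)^2$ by telescoping over $w$ literals) or positively implied by some small subclause (giving validity $1-\gamma_{\mathrm{gap}}$ outright). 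The two organizations are logically equivalent, but yours buys one genuine improvement: by padding the implied subclause to width exactly $w-1$ before invoking the learning guarantee, the only clauses you ever need $\psi$ to contain have width exactly $w$, where the witnessed-false probability is attenuated by the full factor $\mu^w$ and therefore sits comfortably below Algorithm~\ref{cnfalg}'s inclusion threshold $\tfrac{\gamma\mu^{2w}}{4(2n+1)^w}$; the paper instead asks for the width-$\le w/2$ clause $C'\vee\neg\ell$ itself to be learned, where the margin between $\mu^{|C'|+1}\cdot(\text{validity deficit})$ and that $\mu^{2w}$-scaled threshold is much less clear (and, read literally, the ``contains all highly valid clauses of width at most $w$'' conclusion of Lemma~\ref{learned-valid} only has Hoeffding room for clauses of width close to $w$). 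So your monotonicity-plus-padding observation is not just stylistic---it is the cleaner way to make the inclusion direction of the learning lemma mesh with the stated sample thresholds. The remaining bookkeeping you cite ($\gamma_{\mathrm{gap}}\le\gamma/4|\varphi|$, the validity of $\psi$, and the two Hoeffding applications with $m_0,m_1$) checks out.
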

\begin{proof}
We begin by noting that the running time of the algorithm is indeed 
quasipolynomial in $|\varphi|$, $1/\beta$, $1/\gamma$, $1/\mu$, and $\log 1/
\delta$. Now, towards correctness, we observe that by Lemma~\ref{learned-valid},
the formula $\psi$ consisting of the conjunction of learned clauses is $1-3\mu^w
\gamma/8$ valid with probability $1-\delta/2$. We further note that $\varphi
\wedge\psi\models\varphi'$ since the clauses of $\varphi'$ have a simple 
resolution derivation from clauses of $\varphi$ and $\psi$.

Let us consider the first case, when $\varphi$ is at least $(1-\mu^w\gamma/
8)$-valid. Here, since $\varphi\wedge\psi\models\varphi'$, $\varphi'$ is at
least $(1-\mu^w\gamma/2)$-valid by Proposition~\ref{classical-inf-bound}. 
Therefore, by Hoeffding's inequality, in a sample of size $m_1$ from $D$, it is 
satisfied in at least $(1-5\mu^w\gamma/8) m_1$ of the (underlying, full)
examples with probability at least $1-\delta/2$. Since the partial examples are 
consistent with these full examples, this means in particular that no clause of 
$\varphi'$ can be witnessed to evaluate to false in more than $(5\mu^w\gamma/8) 
m_1$ of the partial examples, so we see that the algorithm accepts with 
probability at least $1-\delta$ in this case, as needed.

Now, turning to the second case, we split the clauses of $\varphi$ into two 
formulas, $\varphi_0$ and $\varphi_1$ as follows: let $\varphi_1$ consist of 
those clauses having either $w/2$ literals that are balanced given any subset of
the rest (of the $w/2$) under $D$ or a subclause of width $w$ such that every 
width $w/2$ subclause has a literal $\ell$ such that for a further subclause $C'
\vee\ell$, $\neg C'\sugg\ell$, and let $\varphi_0$ consist of the rest. Clauses 
in $\varphi_1$ are simultaneously true in most of the partial examples: 
essentially by Lemma~\ref{wide-witnessed} with $\mu=1$ and the definition of 
implied literals, each clause in $\varphi_1$ is true in each underlying example 
(though not necessarily witnessed) with probability at least $1-\gamma/
(4|\varphi|)$. Hence by a union bound over the clauses, every clause in 
$\varphi_1$ is simultaneously true with probability $1-\gamma/4$. By contrast, 
though, since $\varphi$ is at most $1-\gamma$-valid, we see that with 
probability at least $3\gamma/4$, some clause in $\varphi_0$ must evaluate to 
false.

Let $\varphi'_0$ be the clauses of $\varphi'$ corresponding to clauses from 
$\varphi_0$. For every subclause $C$ of a clause of $\varphi_0$ of width greater
than $w$, some further subclause $C'\vee\ell$ of width at most $w/2$ must have 
$\neg C'\sugg\neg\ell$, as if every literal is balanced in any small subclause, 
the orignal clause would have been in $\varphi_1$ by the first condition, and if
some width $w$ subclause $C$ had $\neg C'\sugg\ell$ in some further subclause of
every width $w/2$ subclause of $C$, it would be in $\varphi_1$ by the second 
condition. Much as in the proof of Theorem~\ref{mainthm}, by an analogue of 
Lemma~\ref{learned-valid}, since this $C'\vee\neg\ell$ is $(1-\frac{\gamma\mu^w}
{8(2n+1)^w})$-valid, it is in $\psi$ with probability $1-\delta/2$, and hence 
there is some clause of $\psi$ that can be used to eliminate one of the literals
on this width-$w/2$ subclause. Therefore, we find that the clauses of 
$\varphi'_0$ must have width at most $w$ with probability $1-\delta/2$.

Now, since $M_\mu$ reveals each literal with probability $\mu$ independently,
given that the clauses of $\varphi'_0$ have width at most $w$, each clause in 
$\varphi'_0$ is completely revealed with probability at least $\mu^w$, and hence
some clause in $\varphi'_0$ is witnessed to evaluate to false in the partial 
examples sampled from $M_\mu(D)$ with probability at least $3\mu^w\gamma/4$. 
Therefore, by Hoeffding's inequality, some clause is witnessed to evaluate to 
false in at least a $(5\mu^w\gamma/8)$-fraction of the $m_1$ partial examples 
with probability at least $1-\delta/2$ given that the clauses have width at most
$w$, which we argued occurs with probability at least $1-\delta/2$. Therefore, 
the algorithm rejects with probability at least $1-\delta$ in this case, as
needed.
\end{proof}


\section{Directions for future research}\label{openprobs}
Several problems present themselves as natural directions for future work. The
most pressing of these is, {\em can the restriction to distributions with a
correlation gap be lifted?} That is, how can we efficiently reason about 
``medium-strength'' correlations? Although the ultimate objective of such work
would be to strengthen these results to the distribution-free PAC setting, any
work that handled a class of distributions that exhibited such correlations 
would also be of interest. A similar direction would be to obtain results for a 
more general class of masking processes; although it seems that our results 
generalize to masking distributions that simultaneously reveal any width-$w$ set
of literals with non-negligible probability (for $w=\Omega(\log n)$) such as 
$w$-wise independent distributions (Wigderson and Yehudayoff~\cite{wy12} make a 
similar observation about their algorithm), it would be desirable to find other,
perhaps weaker properties that would also permit relatively efficient 
algorithms.

Of course, the results of this work beg the question so far as the classical 
(quasi)automatizability of resolution is concerned. Although there are families
of counterexamples~\cite{bg01width,ab04} showing that a purely width (and/or 
treelike) based approach to finding small resolution proofs such as pursued by 
Ben-Sasson and Wigderson~\cite{bsw01} cannot beat the current best-known bound 
of $n^{O(\sqrt{n\log n})}$, it does not rule out other approaches. Since our
algorithm and analysis essentially establish that every resolution proof over
distributions with a correlation gap has a low-width approximate version using
the learned clauses, it seems significant for our algorithm that the learned 
formula $\psi$ may not have a small-width derivation. Unfortunately, it is not 
clear how one might hope to exploit this in the absence of a distribution. 
Still, if {\em any} algorithm could find resolution derivations in 
quasipolynomial time, then using the results of our previous work~\cite{juba13},
this would also immediately resolve both of the questions suggested in the 
previous paragraph.


The other natural direction in which one might hope to strengthen our results
involves extending them to stronger proof systems than resolution, such as
cutting planes or polynomial calculus (with resolution, aka PCR). We already
observed in previous work~\cite{juba13} that there are natural fragments of 
these proof systems (most already well studied) for which the combined learning 
and reasoning problem is tractable. The question would be whether, as with 
width-restricted resolution, we could use these algorithms as a starting point 
to obtain algorithms for the unrestricted proof system in the context of
reasoning about a distribution. 

There are, of course, limits to what we can hope for: the strong 
non-automatizability results for systems such as bounded-depth Frege based on 
cryptographic hardness assumptions~\cite{bdgmp04,bpr00,kp95} are established by 
formulas equipped with a natural background distribution (over the variables 
encoding parameters generated by the Diffie-Hellman key exchange protocol~\cite
{dh76}), in which moreover the pattern of masking is independent of the 
underlying example. (The bits are {\em not} masked with the same probability as 
in $M_\mu$, though.) Although these results were not stated in the context of 
PAC-Semantics and the distributions were ignored, they should carry over to an 
appropriate generalization of the setting we considered here, still within the 
framework of the prior work~\cite{juba13}. One might reasonably ask if some
analogous negative result can be proved for $M_\mu$ based on leakage-resilient 
cryptography (as $M_\mu$ leaks a constant fraction of the secret parameters).

\section*{Acknowledgements}
The author would like to thank Paul Beame, Eli Ben-Sasson, and Leslie Valiant 
for comments and conversations that helped shape this work.

\bibliographystyle{plain}
\bibliography{robust}

\begin{thebibliography}{10}

\bibitem{ar08}
Michael Alekhnovich and Alexander~A. Razborov.
\newblock Resolution is not automatizable unless {W[P]} is tractable.
\newblock {\em SIAM J. Comput.}, 38(4):1347--1363, 2008.

\bibitem{abfkp08}
Misha Alekhnovich, Mark Braverman, Vitaly Feldman, Adam~R. Klivans, and Toniann
  Pitassi.
\newblock The complexity of properly learning simple concept classes.
\newblock {\em JCSS}, 74(1):16--34, 2008.

\bibitem{ab04}
Albert Atserias and Mar{\'{\i}}a~Luisa Bonet.
\newblock On the automatizability of resolution and related propositional proof
  systems.
\newblock {\em Inf. Comp.}, 189:182--201, 2004.

\bibitem{bks04}
Paul Beame, Henry Kautz, and Ashish Sabharwal.
\newblock Towards understanding and harnessing the potential of clause
  learning.
\newblock {\em JAIR}, 22:319--351, 2004.

\bibitem{bdd98}
Shai Ben-David and Eli Dichterman.
\newblock Learning with restricted focus of attention.
\newblock {\em JCSS}, 56(3):277--298, 1998.

\bibitem{bsw01}
Eli Ben-Sasson and Avi Wigderson.
\newblock Short proofs are narrow -- resolution made simple.
\newblock {\em J. ACM}, 48(2):149--169, 2001.

\bibitem{bdgmp04}
Maria~Luisa Bonet, Carlos Domingo, Ricard {Gavald\'{a}}, Alexis Maciel, and
  Toniann Pitassi.
\newblock Non-automatizability of bounded-depth frege proofs.
\newblock {\em Comput. Complex.}, 13:47--68, 2004.

\bibitem{bg01width}
Mar{\'{\i}}a~Luisa Bonet and Nicola Galesi.
\newblock Optimality of size-width tradeoffs for resolution.
\newblock {\em Computational Complexity}, 10(4):261--276, 2001.

\bibitem{bpr00}
Maria~Luisa Bonet, Toniann Pitassi, and Ran Raz.
\newblock On interpolation and automization for {Frege} proof systems.
\newblock {\em SIAM J. Comput.}, 29(6):1939--1967, 2000.

\bibitem{cei96}
Matthew Clegg, Jeff Edmonds, and Russell Impagliazzo.
\newblock Using the {G}r\"{o}bner basis algorithm to find proofs of
  unsatisfiability.
\newblock In {\em Proc. 28th STOC}, pages 174--183, 1996.

\bibitem{dg95}
Scott~E. Decatur and Rosario Gennaro.
\newblock On learning from noisy and incomplete examples.
\newblock In {\em Proc. 8th COLT}, pages 353--360, 1995.

\bibitem{dh76}
Whitfield Diffie and Martin Hellman.
\newblock New directions in cryptography.
\newblock {\em IEEE Trans. Inform. Theory}, 22:423--439, 1976.

\bibitem{drwy12}
Zeev Dvir, Anup Rao, Avi Wigderson, and Amir Yehudayoff.
\newblock Restriction access.
\newblock In {\em Proc. 3rd ITCS}, 2012.

\bibitem{galil77}
Zvi Galil.
\newblock On resolution with clauses of bounded size.
\newblock {\em SIAM J. Comput.}, 6:444--459, 1977.

\bibitem{juba13}
Brendan Juba.
\newblock Implicit learning of common sense for reasoning.
\newblock To appear in IJCAI'13, 2013.
\newblock Preliminary version: {\em Learning implicitly in reasoning in
  PAC-Semantics}, arXiv:1209.0056v1 [cs.AI].

\bibitem{kss94}
Michael~J. Kearns, Robert~E. Schapire, and Linda~M. Sellie.
\newblock Towards efficient agnostic learning.
\newblock {\em Machine Learning}, 17(2-3):115--141, 1994.

\bibitem{kr97}
Roni Khardon and Dan Roth.
\newblock Learning to reason.
\newblock {\em J. ACM}, 44(5):697--725, 1997.

\bibitem{kr99}
Roni Khardon and Dan Roth.
\newblock Learning to reason with a restricted view.
\newblock {\em Machine Learning}, 35:95--116, 1999.

\bibitem{kp95}
Jan {Kraj\'{\i}\v{c}ek} and Pavel {Pudl\'{a}k}.
\newblock Some consequences of cryptographical conjectures for {$S^1_2$} and
  {EF}.
\newblock In D.~Leivant, editor, {\em Logic and Computational Complexity},
  volume 960 of {\em LNCS}, pages 210--220. Springer, Berlin, 1995.

\bibitem{michael10}
Loizos Michael.
\newblock Partial observability and learnability.
\newblock {\em Artificial Intelligence}, 174(11):639--669, 2010.

\bibitem{ms13}
Ankur Moitra and Michael Saks.
\newblock A polynomial time algorithm for lossy population recovery.
\newblock arXiv:1302.1515, 2013.

\bibitem{prtv00}
Christos Papadimitriou, Prabhakar Raghavan, Hisao Tamaki, and Santosh Vempala.
\newblock Latent semantic indexing: A probabilistic analysis.
\newblock {\em JCSS}, 61(2):217--235, 2000.

\bibitem{roth96}
Dan Roth.
\newblock On the hardness of approximate reasoning.
\newblock {\em Artificial Intelligence}, 82(1--2):273--302, 1996.

\bibitem{tseitin70}
G.~S. Tseitin.
\newblock On the complexity of derivation in propositional calculus.
\newblock In A.~O. Slisenko, editor, {\em Studies in constructive mathematics
  and mathematical logic, part 2}, pages 115--125. Consultants Bureau, New
  York, 1970.

\bibitem{valiant84}
Leslie~G. Valiant.
\newblock A theory of the learnable.
\newblock {\em Communications of the ACM}, 18(11):1134--1142, 1984.

\bibitem{valiant00}
Leslie~G. Valiant.
\newblock Robust logics.
\newblock {\em Artificial Intelligence}, 117:231--253, 2000.

\bibitem{wy12}
Avi Wigderson and Amir Yehudayoff.
\newblock Population recovery and partial identification.
\newblock In {\em Proc. 53rd FOCS}, 2012.

\end{thebibliography}

\end{document}